\newcommand{\gf}{{\mathbb{F}}}
\newtheorem{theorem}{Theorem}
\newtheorem{lemma}{Lemma}
\newtheorem{definition}{Definition}
\newtheorem{example}{Example}
\newtheorem{remark}{Remark}
\begin{document}
\title{A note on the differential spectrum of a class of locally APN functions
	}
	\author{Haode Yan\IEEEauthorrefmark{1}, Ketong Ren\IEEEauthorrefmark{1}\\
	\IEEEauthorblockA{\IEEEauthorrefmark{1}School of Mathematics, Southwest Jiaotong University, Chengdu, China.\\ E-mail:  \href{mailto: hdyan@swjtu.edu.cn}{hdyan}@swjtu.edu.cn(corresponding author), \href{mailto: rkt@my.swjtu.edu.cn}{rkt}@my.swjtu.edu.cn,}\\
	}
	\maketitle
	\begin{abstract}
		Let $\gf_{p^n}$ denote the finite field containing $p^n$ elements, where $n$ is a positive integer and $p$ is a prime. The function $f_u(x)=x^{\frac{p^n+3}{2}}+ux^2$ over $\gf_{p^n}[x]$ with $u\in\gf_{p^n}\setminus\{0,\pm1\}$ was recently studied by Budaghyan and Pal in \cite{Budaghyan2024ArithmetizationorientedAP}, whose differential uniformity is at most $5$ when $p^n\equiv3~(mod~4)$. In this paper, we study the differential uniformity and the differential spectrum of $f_u$ for $u=\pm1$. We first give some properties of the differential spectrum of any cryptographic function. Moreover, by solving some systems of equations over finite fields, we express the differential spectrum of $f_{\pm1}$ in terms of the quadratic character sums.
	\end{abstract}
	
	{\bf Keywords:} cryptographic function; differential uniformity; differential spectrum; character sum
	
	{\bf Mathematics Subject Classification: 11T06, 94A60} 
	
\section{Introduction}
Let $\gf_{q}$ be the finite field with $q$ elements, where $q$ is a prime power. We denote by $\gf_{q}^*:=\gf_q \setminus \{0\}$. Any cryptographic function $F: \gf_{q} \rightarrow \gf_{q}$ can be uniquely represented as a univariate polynomial of degree less than $q$. For a function $F$, the main tools to study $F$ regarding the differential attack \cite{BS91} are the difference distribution table (DDT for short) and the differential uniformity introduced by Nyberg \cite{N94} in 1994. The DDT entry at point $(a,b)$ for any $a,b\in \gf_q$, denoted by $\delta_F(a,b)$, is defined as
\[\delta_F(a,b)=\#\{x\in\gf_q:\mathbb{D}_aF(x)=b\},\]
where $\mathbb{D}_a F(x)=F(x+a)-F(x)$ is the \textit{derivative function} of $F$ at the element $a$. Note that when $a=0$ and $b=0$, the equation $\mathbb{D}_a F(x)=b$ has $q$ solutions in $\gf_q$, which means $\delta_F(0,0)=q$. Besides, when $a=0$ and $b\in\gf_{p^n}^*$, the equation $\mathbb{D}_a F(x)=b$ has no solutions, which means $\delta_F(a,b)=0$. Therefore, for any polynomial, the DDT entries in the line $a=0$ are trivial. The differential uniformity of $F$, denoted by $\Delta_F$, is defined as
\[\Delta_F=\mathrm{max}\left\{\delta_F(a,b):a\in\gf_q^*, b\in\gf_q\right\}.\]
Generally speaking, the smaller the value of $\Delta_F$, the stronger the resistance of $F$ used in S-boxes against the differential attack. A cryptographic function $F$ is called differentially $k$-uniform if $\Delta_F=k$. Particularly when $\Delta_F=1$, $F$ is called a planar function \cite{DO68} or a perfect nonlinear (abbreviated as PN) function \cite{N91}. When $\Delta_F=2$, $F$ is called an almost perfect nonlinear (abbreviated as APN) function \cite{N94}, which is of the lowest possible differential uniformity over $\gf_{2^n}$ as in such finite fields, no PN functions exist. Readers may refer to \cite{BCC10}, \cite{BCCe20}, \cite{DMMe03}, \cite{HRS99}, \cite{HS97}, \cite{ORG}, \cite{PTW16}, \cite{XCX16}, \cite{ZHSS14}, \cite{ZW11} and references therein for some of the new developments on PN and APN functions. Apart from the concepts of PN and APN, a power function $F$ over $\gf_{p^n}$ is said to be locally-APN if \[\mathrm{max}\left\{\delta_F(1,b)| b\in\gf_{p^n}\setminus\gf_p\right\}=2.\] 
This definition was first introduced in \cite{HLXe23} for the case $p=2$ and generalized in \cite{XML23} for odd $p$. For a general function $F$, we can also give the concept of locally APN.
\begin{definition} Let $F$ be a function defined on $\gf_{p^n}$. Then $F$ is called locally-APN if
	\[\mathrm{max}\left\{\delta_F(a,b)|a\in \gf_{p^n}^*, b\in\gf_{p^n}\setminus\gf_p\right\}=2.\] 
\end{definition}


In \cite{BCC10}, the concept of the differential spectrum of a power function was introduced. The differential spectrum of a cryptographic function, compared with the differential uniformity, provides much more detailed information. In particular, the value distribution of the DDT is given directly by the differential spectrum. What's more, the differential spectrum has many applications such as in sequences \cite{BCC05}, \cite{DHKM01}, coding theory \cite{CCZ98}, \cite{CP19}, combinatorial design \cite{TDX20} etc. However, to determine the differential spectrum of a cryptographic function is usually a difficult problem.  Power functions with known differential spectra are summarized in Table \ref{tab:DS}.

For a polynomial function that is not a power function, the investigation of its differential spectrum is much more difficult. There are only a few cryptographic functions whose differential spectra were known \cite{LgJ24}, \cite{PdSW17}, \cite{FurXia}, \cite{WOS:001352564200001}. One of the focus of this paper is to explore a class of binomials studied in \cite{Budaghyan2024ArithmetizationorientedAP}. In \cite{Budaghyan2024ArithmetizationorientedAP}, the differential uniformity of $f_u(x)=x^{\frac{p^n+3}{2}}+ux^2$ with $u\in\gf_{p^n}\setminus\{0,\pm1\}$ has been investigated. In this paper, we determine the differential spectrum of such $f_u(x)$ when $u\in\{\pm1\}$, that is, $f_{\pm1}(x)=x^{\frac{p^n+3}{2}}\pm x^2$. 

This paper is organized as follows. Section \ref{sec:quadraticsums} presents certain quadratic character sums that are essential for the computation of the aimed differential spectrum. In Section \ref{sec:omega_i_eqts}, properties of the differential spectrum of any function are given. In Section \ref{sec:eqtsys_sols}, the number of solutions of several systems of equations over finite fields are investigated, which will be used in Section \ref{sec:difspm}, in which the differential spectrum of $f_{\pm1}$ is determined. Section \ref{sec:con} concludes this paper.

\begin{table}[!t]
	\footnotesize
	\centering
	\caption{Power Functions over $\gf_{p^n}$ with Known Differential Spectra}
	\begin{tabular}{|c|c|c|c|c|}
		\hline
		$p$ & $d$ & Condition & $\Delta_F$ & Ref \\
		\hline
		
		$2$ & $2^t+1$ & gcd$(t,n)=s$ & $2^s$ & \cite{BCC10} \\
		\hline

		$2$ & $2^{2t}-2^t+1$ & gcd$(t,n)=s,\frac{n}{s}\,odd$ & $2^s$ & \cite{BCC10} \\
		\hline

		$2$ & $2^n-2$ & $n\geqslant2$ & $2\,or\,4$ & \cite{BCC10} \\
		\hline

		$2$ & $2^{2k}+2^k+1$ & $n=4k$ & $4$ & \cite{BCC10},\cite{XY17} \\
		\hline

		$2$ & $2^t-1$ & $t=3,n-2$ & $6$ or $8$ & \cite{BCC11} \\
		\hline

		$2$ & $2^t-1$ & $t=\frac{n-1}{2}$, $\frac{n+3}{2}$, $n$ odd & $6$ or $8$ & \cite{BP14} \\
		\hline

		$2$ & $2^m+2^{(m+1)/2}+1$ & $n=2m$, $m\geqslant5$ odd & $8$ & \cite{XYY18} \\
		\hline

		$2$ & $2^{m+1}+3$ & $n=2m$, $m\geqslant5$ odd & $8$ & \cite{XYY18} \\
		\hline

		$2$ & $2^{3k}+2^{2k}+2^k-1$ & $n=4k$ & $2^{2k}$ & \cite{TuLe23} \\
		\hline
		
		$2$ & $\frac{2^m-1}{2^k+1}+1$ & $n=2m$, gcd$(k,m)=1$ & \makecell{$2^m$\\(locally APN)} & \cite{XML23} \\
		\hline

		$3$ & $2\cdot3^{(n-1)/2}+1$ & $n$ odd & $4$ & \cite{DHKM01} \\
		\hline

		$3$ & $\frac{3^n-1}{2}+2$ & $n$ odd & $4$ & \cite{JLLQ22} \\
		\hline

		$5$ & $\frac{5^n-3}{2}$ & any $n$ & $4$ or $5$ & \cite{YL21} \\
		\hline

		$5$ & $\frac{5^n+3}{2}$ & any $n$ & $3$ & \cite{PLZ23} \\
		\hline

		$p$ odd & $p^{2k}-p^k+1$ & gcd$(n,k)=e$, $\frac{n}{e} odd$ & $p^e+1$ & \cite{YZWe19}, \cite{LRF21} \\
		\hline

		$p$ odd & $\frac{p^k+1}{2}$ & gcd$(n,k)=e$ & $\frac{p^e-1}{2}$ or $p^e+1$ & \cite{CHNe13} \\
		\hline

		$p$ odd & $\frac{p^n+1}{p^m+1}+\frac{p^n-1}{2}$ & $p\equiv3\,(mod\;4)$, $m|n$, $n$ odd & $\frac{p^m+1}{2}$ & \cite{CHNe13} \\
		\hline

		$p$ odd & $p^n-3$ & any $n$ & $\leqslant5$ & \cite{XZLH18}, \cite{YXe22} \\
		\hline

		$p$ odd & $p^m+2$ & $n=2m$ & $2$ or $4$ & \cite{HRS99}, \cite{MXLH22} \\
		\hline

		$p$ odd & $2p^{\frac{n}{2}}-1$ & $n$ even & $p^{\frac{n}{2}}$  & \cite{YL22} \\
		\hline

		$p$ odd & $\frac{p^n-3}{2}$ & $p^n\equiv3\,(mod\;4)$, $p^n\geqslant7$ and $p^n\neq27$ & $2$ or $3$ & \cite{YMT24} \\
		\hline

		$p$ odd & $\frac{p^n+3}{2}$ & $p\geqslant5$, $p^n\equiv1\,(mod\;4)$ & $3$ & \cite{JLLQ21} \\
		\hline

		$p$ odd & $\frac{p^n+3}{2}$ & $p^n=11$ or $p^n\equiv3\,(mod\;4)$, $p\neq3$, $p^n\neq11$ & $2$ or $4$ & \cite{YMT23} \\
		\hline

		$p$ odd & $\frac{p^n+1}{4}$ & $p\neq3$, $p^n>7$, $p^n\equiv7\,(mod\;8)$& $2$ & \cite{TY23}, \cite{HRS99} \\
		\hline

		$p$ odd & $\frac{3p^n-1}{4}$ & $p\neq3$, $p^n>7$, $p^n\equiv3\,(mod\;8)$& $2$ & \cite{TY23}, \cite{HRS99} \\
		\hline

		$p$ odd & $\frac{p^n+1}{4}$, $\frac{3p^n-1}{4}$ & $p=3$ or $p>3$, $p^n\equiv3\,(mod\;4)$& $4$ & \cite{BXe23} \\
		\hline
		
		any $p$ & $k(p^m-1)$ & $n=2m$, gcd$(k,p^m+1)=1$ & \makecell{$p^m-2$\\(locally APN)} & \cite{HLXe23} \\
		\hline
	\end{tabular}
	\label{tab:DS}
\end{table}

\section{On Quadratic Character Sums}\label{sec:quadraticsums}
In this section, we will introduce some results on the quadratic character sums over the finite field $\gf_q$. Let $\chi(\cdot)$ be the quadratic multiplicative character of $\gf_q$, which is defined as
$$\chi(x)=
\left\{
\begin{array}{cl}
	1, & \mbox{if } x \mbox{ is a square in } \gf_q^*,\\
	-1,& \mbox{if } x \mbox{ is a nonsquare in }\gf_q^*, \\
	0, & \mbox{if } x=0.
\end{array}
\right.
$$
Let $\gf_q[x]$ be the polynomial ring over $\gf_q$. We consider the character sum of the form
\begin{eqnarray} \label{2:x1} \sum_{x\in\gf_q}\chi(f(x))\end{eqnarray}
with $f\in\gf_q[x]$. The case of $\deg(f) =1$ is trivial, and for $\deg(f) =2$, the following explicit formula was established in \cite{FF}.
\begin{lemma}\cite[Theorem 5.48]{FF}\label{lem:qsum_deg2}
	Let $f(x)=a_2x^2+a_1x+a_0\in\gf_q[x]$ with $p$ odd and $a_2\neq 0$. Put $d=a_1^2-4a_0a_2$ and let $\chi(\cdot)$ be the quadratic character of $\gf_q$. Then
	$$\sum\limits_{x\in\gf_q}\chi(f(x))=
	\left\{
	\begin{array}{lcl}
		-\chi(a_2), & if\;d\neq 0,\\
		(p^n-1)\chi(a_2), & if\;d=0. \\
	\end{array}
	\right.
	$$
\end{lemma}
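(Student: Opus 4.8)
The plan is to reduce the general quadratic to a normalized form by completing the square, and then to evaluate a single residual character sum in the two cases $d=0$ and $d\neq 0$. This isolates all the arithmetic content into one sum whose value only depends on whether the discriminant vanishes.

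First I would complete the square. Since $p$ is odd, $2a_2$ is invertible, and a direct computation gives
$$f(x) = a_2\left(x + \frac{a_1}{2a_2}\right)^2 - \frac{d}{4a_2}.$$
The substitution $y = x + a_1/(2a_2)$ is a bijection of $\gf_q$, so by the multiplicativity of $\chi$ (using $\chi(a_2)\neq 0$) one obtains
$$\sum_{x \in \gf_q} \chi(f(x)) = \chi(a_2)\sum_{y \in \gf_q}\chi\!\left(y^2 - c\right), \qquad c = \frac{d}{4a_2^2}.$$
Because $4a_2^2$ is a nonzero square, $c=0$ precisely when $d=0$, so it remains only to evaluate $T = \sum_{y}\chi(y^2 - c)$ according to whether $c$ vanishes. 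The case $c=0$ is immediate: $\chi(y^2)=1$ for every $y\neq 0$ while $\chi(0)=0$, whence $T = q-1 = p^n-1$, giving the second branch of the formula.

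For $c\neq 0$ I would pass from summing over $y$ to summing over the values $v=y^2$. The number of square roots of $v$ in $\gf_q$ equals $1+\chi(v)$, an identity that also holds at $v=0$, so
$$T = \sum_{v \in \gf_q}\bigl(1 + \chi(v)\bigr)\chi(v - c) = \sum_{v}\chi(v - c) + \sum_{v}\chi\bigl(v(v-c)\bigr).$$
The first sum vanishes because $\chi$ sums to zero over all of $\gf_q$. For the second, the $v=0$ term is zero, and for $v\neq 0$ one writes $\chi(v(v-c)) = \chi(v^2)\chi(1 - c/v) = \chi(1 - c/v)$; as $v$ ranges over $\gf_q^*$ the argument $w = 1 - c/v$ ranges bijectively over $\gf_q \setminus \{1\}$, so this contribution equals $\sum_{w \neq 1}\chi(w) = -\chi(1) = -1$. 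Hence $T=-1$ and the first branch follows.

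The only genuinely delicate step is the evaluation of the twisted sum $\sum_{v}\chi(v(v-c))$ in the case $d\neq 0$; the completing-the-square reduction and the degenerate case $c=0$ are routine. I expect the main care to lie in justifying the counting identity $\#\{y : y^2 = v\} = 1+\chi(v)$ uniformly in $v$, and in verifying that $w = 1 - c/v$ is indeed a bijection from $\gf_q^*$ onto $\gf_q \setminus \{1\}$, which is what collapses the sum to $-1$.
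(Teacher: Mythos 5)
Your proof is correct in every step: the completion of the square is valid because $p$ is odd, the counting identity $\#\{y:y^2=v\}=1+\chi(v)$ does hold uniformly (including at $v=0$, where both sides equal $1$), and the substitution $w=1-c/v$ is indeed a bijection from $\gf_q^*$ onto $\gf_q\setminus\{1\}$ when $c\neq0$, which collapses the twisted sum to $-1$. Note, however, that the paper itself gives no proof of this lemma --- it is imported verbatim as \cite[Theorem 5.48]{FF} --- so there is no internal argument to compare against; your derivation is essentially the classical one found in Lidl--Niederreiter, and the same devices (the $1+\chi(\cdot)$ root-counting trick and variable substitutions that collapse sums) are exactly those the paper deploys later in the proofs of Lemma \ref{lem:qsum_ds} and Lemma \ref{lem:eqtsys_y_m1m1m1}. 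The one cosmetic point worth fixing: in the case $c\neq0$ you should remark explicitly that $d\neq0$ guarantees the two roots of $y^2-c$ are distinct, which is precisely why the sum does not degenerate to the $d=0$ value; your argument uses this implicitly through $c\neq0$ and is sound, but stating it makes the case split transparent.
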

Nevertheless, for a polynomials $f$ with degree $3$ or higher, computing $\sum\limits_{x\in\gf_q}\chi(f(x))$ or deriving a specific formula thereof is generally challenging. The subsequent lemma provides lower and upper bounds for any multiplicative character sum.
\begin{lemma}\cite{MR2514094}\label{lem:WeilB}
	Let $\gf_q$ be a finite field with $q$ odd. Let $f(x)=ax^3+bx^2+cx+d\in \gf_q[x]$ be a cubic polynomial with distinct roots in $\overline{\gf}_q$ and $\chi$ be a multiplicative character sum of $\gf_q$. Then we have
	$$\left|\sum_{x\in\gf_q}\chi\left(f(x)\right)\leqslant2\sqrt{q}\right|.$$
\end{lemma}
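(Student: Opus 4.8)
The inequality as typeset should be read as $\left|\sum_{x\in\gf_q}\chi(f(x))\right|\leqslant 2\sqrt{q}$, with $\chi$ a nontrivial multiplicative character; since this paper only ever applies the bound with $\chi$ the quadratic character, I will sketch that case, where a clean and essentially self-contained route is available through elliptic curves.

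The plan is to convert the character sum into a point count on the curve $C:\,y^2=f(x)$ over $\gf_q$ and then to invoke Hasse's theorem. The elementary identity driving this is that for each fixed $x$ the number of $y\in\gf_q$ with $y^2=f(x)$ equals $1+\chi(f(x))$, namely $1$ when $f(x)=0$, $2$ when $f(x)$ is a nonzero square, and $0$ when $f(x)$ is a nonsquare. Summing over $x\in\gf_q$ gives
$$N:=\#\{(x,y)\in\gf_q^2:\,y^2=f(x)\}=\sum_{x\in\gf_q}\bigl(1+\chi(f(x))\bigr)=q+\sum_{x\in\gf_q}\chi(f(x)),$$
so that $\sum_{x\in\gf_q}\chi(f(x))=N-q$ and the task reduces entirely to estimating the affine point count $N$.

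Next I would use that $f$ is cubic with distinct roots in $\overline{\gf}_q$: since $q$ is odd this makes $C$ nonsingular, and its smooth projective model $E$ is an elliptic curve of genus $1$ carrying exactly one rational point at infinity because $\deg f=3$ is odd. Hence $\#E(\gf_q)=N+1$. Hasse's theorem then gives $\#E(\gf_q)=q+1-a$ with $|a|\leqslant 2\sqrt{q}$, so $N=q-a$ and
$$\Bigl|\sum_{x\in\gf_q}\chi(f(x))\Bigr|=|N-q|=|a|\leqslant 2\sqrt{q},$$
as claimed. For a general nontrivial $\chi$ one argues identically after replacing $C$ by the Kummer cover $y^{\,\ord(\chi)}=f(x)$; since $f$ is squarefree of degree $3$ it is not a perfect power, and the statement is then exactly the Weil bound for multiplicative character sums with the number of distinct roots equal to $3$, yielding $(\deg f-1)\sqrt{q}=2\sqrt{q}$.

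The entire depth of the argument sits in the geometric input, namely Hasse's bound $|a|\leqslant 2\sqrt{q}$ (equivalently the Riemann hypothesis for the genus-$1$ curve $E$), which I would cite as a black box exactly as the lemma's reference does. The only points requiring genuine care are the geometric hypotheses feeding into it: that distinct roots of $f$, equivalently $\gcd(f,f')=1$ in odd characteristic, force $C$ to be smooth, and that an odd-degree model $y^2=f(x)$ contributes precisely one rational point at infinity, so that the correction $N=\#E(\gf_q)-1$ is exact and no parity ambiguity enters the count.
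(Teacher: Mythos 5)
Your argument is correct, and there is nothing in the paper to diverge from: the paper states this lemma as a black box with a citation to \cite{MR2514094} and gives no proof at all. Your route---for each $x$ the number of $y$ with $y^2=f(x)$ is $1+\chi(f(x))$, so $\sum_{x\in\gf_q}\chi(f(x))=N-q$ where $N$ is the affine point count on $y^2=f(x)$, and then Hasse's bound $|a|\leqslant 2\sqrt{q}$ finishes---is the standard proof, and it is in fact exactly the mechanism the paper itself deploys in the discussion immediately following the lemma, where it writes $N_{p,n}=p^n+1+\Gamma_{p,n}$ for the curve $E:y^2=f(x)$ and works with the Frobenius eigenvalues $\alpha,\beta$. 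You were also right to silently repair two defects in the statement as typeset: the absolute-value bars are misplaced (the intended claim is $\bigl|\sum_{x\in\gf_q}\chi(f(x))\bigr|\leqslant 2\sqrt{q}$), and ``multiplicative character'' must be read as \emph{nontrivial}, since for the trivial character the sum equals $q$ minus the number of $\gf_q$-roots of $f$ and the bound fails; the paper only ever uses the quadratic character, so your restriction costs nothing. The two hypotheses you flagged as the genuinely load-bearing ones are indeed the right ones: distinct roots (equivalently $\gcd(f,f')=1$ in odd characteristic) make the affine curve smooth so its projective model is an honest elliptic curve, and odd degree $3$ forces exactly one rational point at infinity, so $N=\#E(\gf_q)-1$ holds with no parity ambiguity. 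Your remark on general $\chi$ via the Kummer cover $y^{\mathrm{ord}(\chi)}=f(x)$, with the Weil bound $(d-1)\sqrt{q}$ for $d=3$ distinct roots, is likewise correct, though superfluous for this paper.
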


However, sometimes we need the exact value of the character sum $\sum\limits_{x\in\gf_q}\chi(f(x))$. For the case $deg(f(x))=3$, such a sum can be calculated by considering $\gf_{p^n}$-rational points of elliptic curves over $\gf_p$. More precisely, assume that $f$ is a cubic function over $\gf_{p^n}$ and denote \[\Gamma_{p,n}=\sum\limits_{x\in\gf_{p^n}}\chi(f(x)).\]
To evaluate $\Gamma_{p,n}$, several primary concepts from the theory of elliptic curves shall be taken into consideration. More details on the terminologies and notation can be found in \cite{MR2514094}. Let $E/\gf_p$ be the elliptic curve $E:y^2=f(x)$ over $\gf_{p^n}$, and $N_{p,n}$ denote the number of $F_{p^n}$-rational points (with the extra point at infinity) on the curve $E/\gf_p$.  From Subsection 1.3 and Theorem 2.3.1 in \cite[Chapter V]{MR2514094}, $N_{p,n}$ can be assessed by $\Gamma_{p,n}$. To be more exact, for every $n \geqslant 1$, $$N_{p,n} = p^n +1+\Gamma_{p,n}.$$
Furthermore, 
\begin{equation}\label{eqt:gammapn}
	\Gamma_{p,n}=-\alpha^n-\beta^n,
\end{equation}
where $\alpha$ and $\beta$ are the complex solutions of the quadratic equation $T^2 + \Gamma_{p,1}T + p = 0$. With an exploration, $\Gamma_{p,n}$ can be determined by $\Gamma_{p,1}$ directly and explicitly. We have
\begin{equation}\label{eqt:gammapn_fml}
\Gamma_{p,n}=\frac{(-1)^{n+1}}{2^{n-1}}\sum_{k=0}^{\lfloor\frac{n}{2}\rfloor}(-1)^k\binom{n}{2k}(\Gamma_{p,1})^{n-2k}(4p-(\Gamma_{p,1})^2)^k.
\end{equation}
Moreover, when $\Gamma_{p,1}=0$, we have
\begin{equation}\label{eqt:gammapn_0}
\Gamma_{p,n}=\left\{
	\begin{array}{ll}
	   (-1)^{\frac{n}{2}+1}\cdot2\cdot p^{\frac{n}{2}}, & n~\text{is even};\\
	   0, & n~\text{is odd}.\\
	\end{array}
\right.
\end{equation}
We define a specific character sum
$$\lambda_{p,n}=\sum\limits_{x\in\gf_{p^n}}\chi(x(x^2-2x-1)),$$
which will be used in the sequel. 
In the following examples, we give the exact value of $\lambda_{p,n}$ for a given $p$. 
\begin{example}
	Let $p=7$. For $n=1$, one has $\lambda_{7,1}=-4$. By (\ref{eqt:gammapn_fml}), we have
	$$\lambda_{7,n}=\sum_{k=0}^{\lfloor\frac{n}{2}\rfloor}(-1)^{k+1}\binom{n}{2k}2^{n-2k+1}3^k.$$
\end{example}
\begin{example}
	Let $p=31$. For $n=1$, one has $\lambda_{31,1}=0$. By (\ref{eqt:gammapn_0}), we have 
	$$\lambda_{31,n}=\left\{
		\begin{array}{ll}
		   (-1)^{\frac{n}{2}+1}\cdot2\cdot 31^{\frac{n}{2}}, & n~\text{is even};\\
		   0, & n~\text{is odd}.\\
		\end{array}
	\right.
	$$
\end{example}
The key to determine $\lambda_{p,n}$ is to calculate the value of $\lambda_{p,1}$. For the convenience, we list the values of $\lambda_{p,1}$ for all primes $3\leqslant p\leqslant1039$ with $p\equiv3~(mod~4)$ in Table \ref{tab:valuesoflambda}, which are computed by MAGMA.
\begin{table}[!htp]
	\footnotesize
	\centering
	\caption{The values of $\lambda_{p,1}$ for all primes $3\leqslant p\leqslant1039$ with $p\equiv3~(mod~4)$}
	\begin{tabular}{|c|c|c|c|c|c|c|c|c|c|c|c|c|c|c|c|}
		\hline
		$p$ & $3$ & $7$ & $11$ & $19$ & $23$ & $31$ & $43$ & $47$ & $59$ & $67$ & $71$ & $79$ & $83$ & $103$ & $107$ \\
		\hline
		
		$\lambda_{p,1}$ & $2$ & $-4$ & $-2$ & $2$ & $4$ & $0$ & $6$ & $-8$ & $14$ & $10$ & $12$ & $-8$ & $-6$ & $-4$ & $-2$ \\
		\hline

		\hline
		$p$ & $127$ & $131$ & $139$ & $151$ & $163$ & $167$ & $179$ & $191$ & $199$ & $211$ & $223$ & $227$ & $239$ & $251$ & $263$ \\
		\hline
		
		$\lambda_{p,1}$ & $16$ & $-6$ & $-10$ & $4$ & $2$ & $-20$ & $-6$ & $-16$ & $-4$ & $-22$ & $0$ & $18$ & $-24$ & $-18$ & $12$ \\
		\hline

		\hline
		$p$ & $271$ & $283$ & $307$ & $311$ & $331$ & $347$ & $359$ & $367$ & $379$ & $383$ & $419$ & $431$ & $439$ & $443$ & $463$ \\
		\hline
		
		$\lambda_{p,1}$ & $8$ & $6$ & $18$ & $-28$ & $14$ & $-18$ & $-4$ & $-8$ & $-2$ & $0$ & $26$ & $40$ & $36$ & $6$ & $-8$ \\
		\hline 
 
		\hline
		$p$ & $467$ & $479$ & $487$ & $491$ & $499$ & $503$ & $523$ & $547$ & $563$ & $571$ & $587$ & $599$ & $607$ & $619$ & $631$ \\
		\hline
		
		$\lambda_{p,1}$ & $-14$ & $0$ & $-20$ & $-10$ & $-22$ & $20$ & $14$ & $-38$ & $18$ & $38$ & $-34$ & $-12$ & $-16$ & $46$ & $-44$ \\
		\hline 

		\hline
		$p$ & $643$ & $647$ & $659$ & $683$ & $691$ & $719$ & $727$ & $739$ & $743$ & $751$ & $787$ & $811$ & $823$ & $827$ & $839$ \\
		\hline
		
		$\lambda_{p,1}$ & $42$ & $12$ & $-6$ & $-42$ & $-6$ & $24$ & $-12$ & $18$ & $44$ & $8$ & $-22$ & $-18$ & $-28$ & $22$ & $-36$ \\
		\hline 

		\hline
		$p$ & $859$ & $863$ & $883$ & $887$ & $907$ & $911$ & $919$ & $947$ & $967$ & $971$ & $983$ & $991$ & $1019$ & $1031$ & $1039$ \\
		\hline

		$\lambda_{p,1}$ & $-50$ & $-32$ & $34$ & $36$ & $38$ & $-24$ & $36$ & $-14$ & $28$ & $38$ & $20$ & $16$ & $6$ & $-20$ & $40$\\
		\hline

	\end{tabular}
	\label{tab:valuesoflambda}
\end{table}

At last, we present several results below concerning the exact values of specific character sums used in Section \ref{sec:difspm}.
\begin{lemma}\label{lem:qsum_ds}
	When $p^n\equiv3~(mod~4)$, we have
	\begin{enumerate}
		\item $\sum\limits_{x\in\gf_{p^n}}\chi\left(x(x-\frac{1}{2})(x-1)\right)=0$.
		\item $\sum\limits_{x\in\gf_{p^n}}\chi\left(\left(x-\frac{1}{2}\right)\left(x^2-x+\frac{1}{2}\right)\right)=0$.
		\item $\sum\limits_{x\in\gf_{p^n}}\chi\left(x(x-1)\left(x^2-x+\frac{1}{2}\right)\right)=-1$.
		\item $\sum\limits_{x\in\gf_{p^n}}\chi\left((x-1)\left(x^2-x+\frac{1}{2}\right)\right)=\lambda_{p,n}$.
		\item $\sum\limits_{x\in\gf_{p^n}}\chi\left(x\left(x-\frac{1}{2}\right)\left(x^2-x+\frac{1}{2}\right)\right)=-1-\chi(2)\lambda_{p,n}$.
	\end{enumerate}
\end{lemma}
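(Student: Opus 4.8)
The plan is to split the five sums into two groups: parts (1)--(3), which I would handle by exploiting the symmetry of the arguments about $x=\tfrac{1}{2}$, and parts (4)--(5), which I would reduce to the defining sum $\lambda_{p,n}=\sum_{x}\chi(x(x^2-2x-1))$. Throughout I use that $p^n\equiv3\pmod 4$ gives $\chi(-1)=-1$, that $\chi$ is multiplicative with $\chi(z^2)=1$ for $z\neq0$, and that $\sum_{x\in\gf_{p^n}}\chi(x+c)=0$ for each $c$.

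For (1) and (2) I would substitute $x=t+\tfrac{1}{2}$. Then $x-\tfrac{1}{2}$ becomes $t$, the product $x(x-1)$ becomes $t^2-\tfrac{1}{4}$, and the irreducible quadratic $x^2-x+\tfrac{1}{2}$ becomes $t^2+\tfrac{1}{4}$; hence the argument of $\chi$ becomes $t(t^2-\tfrac{1}{4})$ in (1) and $t(t^2+\tfrac{1}{4})$ in (2), both odd in $t$. Since $\chi(-y)=\chi(-1)\chi(y)=-\chi(y)$, the contributions of $t$ and $-t$ cancel (with the $t=0$ term equal to $\chi(0)=0$), so both sums vanish. For (3) the same centering produces the even polynomial $t^4-\tfrac{1}{16}$, so parity alone is not enough; instead I would substitute $w=x^2-x$ in the original sum, using $\#\{x:x^2-x=w\}=1+\chi(1+4w)$. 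This writes the sum of (3) as $\sum_w\chi(w(w+\tfrac{1}{2}))+\sum_w\chi((1+4w)w(w+\tfrac{1}{2}))$. The first is a quadratic character sum with nonzero discriminant, equal to $-1$ by Lemma~\ref{lem:qsum_deg2}; the second is a cubic sum whose roots $0,-\tfrac{1}{4},-\tfrac{1}{2}$ are symmetric about $-\tfrac{1}{4}$, so recentering there makes the argument odd and the sum vanishes. Hence (3) equals $-1$.

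Parts (4) and (5) are the crux. For (4), the shift $x=y+1$ turns the argument into $y(y^2+y+\tfrac{1}{2})$; for $y\neq0$ I factor out $\chi(y^2)=1$ to get $\chi(y+\tfrac{1}{2y}+1)$ and set $u=y+\tfrac{1}{2y}$, whose fibres have size $1+\chi(u^2-2)$. This recasts the sum as $\sum_u\chi(u+1)+\sum_u\chi((u^2-2)(u+1))$, where the first term is $0$ and, after $u=X-1$, the second is exactly $\sum_X\chi(X(X^2-2X-1))=\lambda_{p,n}$; this is the classical $2$-isogeny identity for $y^2=x(x^2+ax+b)$ performed by hand. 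For (5), I would center and rescale the argument to $s(s+1)(s^2+1)$ and then apply the inversion $s\mapsto1/s$, which drops the quartic to the cubic $(r+1)(r^2+1)$ up to the square factor $r^{-4}$ and yields $-1+\sum_r\chi((r+1)(r^2+1))$; the substitution $r\mapsto-r$ (again using $\chi(-1)=-1$) turns this cubic sum into $-\sum_r\chi((r-1)(r^2+1))$, and the same multiplicative substitution as in (4), applied to $(r-1)(r^2+1)$, evaluates it as $\chi(2)\lambda_{p,n}$ (the factor $\chi(2)$ coming from clearing denominators). Thus (5) equals $-1-\chi(2)\lambda_{p,n}$.

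The main obstacle is the design of parts (4) and (5): one must find substitutions that collapse a genuine cubic and a genuine quartic onto the single cubic $x(x^2-2x-1)$ defining $\lambda_{p,n}$ --- note that the relevant curves are only isogenous, not affinely equivalent, so the multiplicative substitution $u=y+\tfrac{1}{2y}$ (rather than any affine change of variable) is essential --- while correctly tracking the boundary terms at $y=0$ and $s=0$ and the $\chi(2)$ twists produced when denominators are cleared.
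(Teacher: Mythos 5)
Your proposal is correct throughout, and for parts (1)--(4) it is essentially the paper's own argument: (1) and (3) are identical (centering to get an odd cubic, respectively the fibre count $\#\{x:x^2-x=w\}=1+\chi(1+4w)$ splitting the sum into a quadratic sum equal to $-1$ by Lemma~\ref{lem:qsum_deg2} and a cubic sum killed by its symmetry about $-\frac{1}{4}$); (2) uses centering at $\frac{1}{2}$ where the paper uses the reflection $x\mapsto 1-x$, the same symmetry in different clothing; and (4) is the paper's computation with $u=y+\frac{1}{2y}$ in place of the paper's $t=\frac{y^2+y+\frac12}{y}=u+1$, both counting fibres via the discriminant $u^2-2=t^2-2t-1$ and landing on $\sum_X\chi(X(X^2-2X-1))=\lambda_{p,n}$. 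Part (5) is where you genuinely diverge. The paper attacks the quartic directly: since $x^2-x+\frac12\neq0$, it writes the summand as $\chi\bigl(\frac{x(x-\frac12)}{x^2-x+\frac12}\bigr)$, sets $t$ equal to that rational function, counts fibres of this degree-two map via the discriminant $-t^2+t+\frac14$ (carefully treating the fixed point $t=1\leftrightarrow x=1$), and finishes with the rescaling $t=\frac{y}{2}$, which produces the $\chi(2)$. You instead recenter to $\sum_s\chi(s(s+1)(s^2+1))$, use the inversion $s\mapsto 1/r$ (legitimate since $\chi(r^{-4})=1$, with the boundary term $-1$ correctly extracted at $r=0$) to collapse the quartic to the cubic $(r+1)(r^2+1)$, flip sign via $r\mapsto-r$ and $\chi(-1)=-1$, and then reduce $(r-1)(r^2+1)$ to part (4) by the affine rescaling $r=2x-1$, which yields $\chi(8)=\chi(2)$ times $\lambda_{p,n}$; I verified each of these steps and the bookkeeping is right. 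Your route buys a cleaner structure --- one fibre-counting computation, done once in (4) and then reused, with the $\chi(2)$ twist appearing transparently from clearing denominators --- whereas the paper's route is self-contained for (5) at the cost of a second, slightly more delicate fibre count with a fixed-point correction. (One small caveat: your closing claim that an affine change of variable cannot work is overstated for the final cubic, since $(r-1)(r^2+1)$ \emph{is} affinely equivalent to $8\cdot(x-1)(x^2-x+\frac12)$; the non-affine substitution is genuinely needed only to pass from the quartic/the sums in (4)--(5) to the cubic defining $\lambda_{p,n}$.)
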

\begin{proof}
	\begin{enumerate}
		\item Set $y=x-\frac{1}{2}$, then $x=y+\frac{1}{2}$ and 
				$$\sum\limits_{x\in\gf_{p^n}}\chi\left(x\left(x-\frac{1}{2}\right)(x-1)\right)=\sum\limits_{y\in\gf_{p^n}}\chi\left(\left(y+\frac{1}{2}\right)y\left(y-\frac{1}{2}\right)\right)=\sum\limits_{y\in\gf_{p^n}}\chi\left(y\left(y^2-\frac{1}{4}\right)\right).$$
		Let $y=-z$, then
				\[\sum\limits_{y\in\gf_{p^n}}\chi\left(y\left(y^2-\frac{1}{4}\right)\right)	=\sum\limits_{z\in\gf_{p^n}}\chi\left(-z\left(z^2-\frac{1}{4}\right)\right)=-\sum\limits_{z\in\gf_{p^n}}\chi\left(z\left(z^2-\frac{1}{4}\right)\right).\]
		Then $\sum\limits_{y\in\gf_{p^n}}\chi\left(y\left(y^2-\frac{1}{4}\right)\right)=0$. This implies that $\sum\limits_{x\in\gf_{p^n}}\chi\left(x\left(x-\frac{1}{2}\right)(x-1)\right)=0$. 
		\item Set  $y=-x+1$. Then $x=-y+1$ and 
		\begin{align*}
		\sum\limits_{x\in\gf_{p^n}}\chi\left(\left(x-\frac{1}{2}\right)\left(x^2-x+\frac{1}{2}\right)\right)&=\sum\limits_{y\in\gf_{p^n}}\chi\left(\left(-y+1-\frac{1}{2}\right)\left((-y+1)^2-(-y+1)+\frac{1}{2}\right)\right)\\
		&=-\sum\limits_{y\in\gf_{p^n}}\chi\left(\left(y-\frac{1}{2}\right)\left(y^2-y+\frac{1}{2}\right)\right).
		\end{align*}
		Hence $\sum\limits_{x\in\gf_{p^n}}\chi\left(\left(x-\frac{1}{2}\right)\left(x^2-x+\frac{1}{2}\right)\right)=0$.
		\item Let $u=x^2-x$. For any $u\in\gf_{p^n}$, the number of $x$'s satisfying $x^2-x=u$ is $1+\chi(1+4u)$. Then
			\begin{align*}
				\sum\limits_{x\in\gf_{p^n}}\chi\left(x(x-1)\left(x^2-x+\frac{1}{2}\right)\right)&=
				\sum\limits_{u\in\gf_{p^n}}\chi\left(u\left(u+\frac{1}{2}\right)\right)(1+\chi(1+4u))\\
				&=\sum\limits_{u\in\gf_{p^n}}\chi\left(u\left(u+\frac{1}{2}\right)\right)+\sum\limits_{u\in\gf_{p^n}}\chi\left(u\left(u+\frac{1}{2}\right)(1+4u)\right)\\
				&=\sum\limits_{u\in\gf_{p^n}}\chi\left(u\left(u+\frac{1}{2}\right)\right)+\sum\limits_{u\in\gf_{p^n}}\chi\left(u\left(u+\frac{1}{4}\right)\left(u+\frac{1}{2}\right)\right).
			\end{align*}
			Note that 
			\begin{align*}
			\sum\limits_{u\in\gf_{p^n}}\chi\left(u\left(u+\frac{1}{4}\right)\left(u+\frac{1}{2}\right)\right)&=\sum\limits_{v\in\gf_{p^n}}\chi\left(\left(-v-\frac{1}{2}\right)\left(-v-\frac{1}{4}\right)(-v)\right)\\
			&=-\sum\limits_{v\in\gf_{p^n}}\chi\left(v\left(v+\frac{1}{4}\right)\left(v+\frac{1}{2}\right)\right).
			\end{align*}
			Then $\sum\limits_{u\in\gf_{p^n}}\chi\left(u\left(u+\frac{1}{4}\right)\left(u+\frac{1}{2}\right)\right)=0$. This with Lemma \ref{lem:qsum_deg2} shows that $$\sum\limits_{x\in\gf_{p^n}}\chi\left(x(x-1)\left(x^2-x+\frac{1}{2}\right)\right)=-1.$$
		
			\item Set $y=x-1$, then $\sum\limits_{x\in\gf_{p^n}}\chi\left((x-1)\left(x^2-x+\frac{1}{2}\right)\right)=\sum\limits_{y\in\gf_{p^n}}\chi\left(y\left(y^2+y+\frac{1}{2}\right)\right)=\sum\limits_{y\in\gf_{p^n}^*}\chi\left(\frac{y^2+y+\frac{1}{2}}{y}\right)$.\\ Let $t=\frac{y^2+y+\frac{1}{2}}{y}$, then we can obtain a quadratic equation
				$$y^2+(1-t)y+\frac{1}{2}=0,$$
			whose discriminant is $\Delta=t^2-2t-1$. For $t\in\gf_{p^n}$, the number of $y$ satisfying the above quadratic equation is $1+\chi(\Delta)$.
			Then $$\sum\limits_{y\in\gf_{p^n}^*}\chi\left(\frac{y^2+y+\frac{1}{2}}{y}\right)=\sum\limits_{t\in\gf_{p^n}}(1+\chi(\Delta))\chi(t)=\sum_{t\in\gf_{p^n}}\chi(t)+\sum\limits_{t\in\gf_{p^n}}\chi(t(t^2-2t-1))=\lambda_{p,n}.$$
			\item It is clear that $x^2-x+\frac{1}{2}\neq 0$, otherwise $\left(x-\frac{1}{2}\right)^2=-\left(\frac{1}{2}\right)^2$, which contradicts to $\chi(-1)=-1$. Then $$\sum\limits_{x\in\gf_{p^n}}\chi\left(x\left(x-\frac{1}{2}\right)\left(x^2-x+\frac{1}{2}\right)\right)=\sum\limits_{x\in\gf_{p^n}}\chi\left(\frac{x(x-\frac{1}{2})}{x^2-x+\frac{1}{2}}\right).$$
				Let $t=\frac{x(x-\frac{1}{2})}{x^2-x+\frac{1}{2}}$, then
				\begin{equation}\label{eqt:qsum_ds_t_2}
					\left(t-1\right)x^2+\left(\frac{1}{2}-t\right)x+\frac{t}{2}=0.
				\end{equation}
				Note that $x=1$ if and only if $t=1$. When $t\neq1$, the discriminant of (\ref{eqt:qsum_ds_t_2}) is $-t^2+t+\frac{1}{4}$. Then we have
				\begin{equation*}
					\begin{array}{ll}
						\sum\limits_{x\in\gf_{p^n}}\chi\left(\frac{x\left(x-\frac{1}{2}\right)}{x^2-x+\frac{1}{2}}\right)&=\chi(1)+\sum\limits_{t\in\gf_{p^n},t\neq1}\left(1+\chi(\Delta)\right)\chi(t) \\
						&=1+(-2)+\sum\limits_{t\in\gf_{p^n}}\left(1-\chi\left(t^2-t-\frac{1}{4}\right)\right)\chi(t) \\
						&=-1-\sum\limits_{t\in\gf_{p^n}}\chi\left(t\left(t^2-t-\frac{1}{4}\right)\right).
					\end{array}
				\end{equation*}
				Set $t=\frac{y}{2}$, then
				$$\sum\limits_{t\in\gf_{p^n}}\chi\left(t\left(t^2-t-\frac{1}{4}\right)\right)=\sum\limits_{t\in\gf_{p^n}}\chi\left(\frac{y}{2}\left(\frac{y^2}{4}-\frac{y}{2}-\frac{1}{4}\right)\right)=\chi(2)\lambda_{p,n}.$$
				That is desired result follows.
	\end{enumerate}
\end{proof}

\section{The properties of the differential spectrum of a general cryptographic function over finite field}\label{sec:omega_i_eqts}
First, we give the definition of the differential spectrum of a cryptographic function. 
\begin{definition}\label{def:ds}
Let $F$ be a function from $\gf_{q}$ to $\gf_{q}$ with differential uniformity $\Delta_F$, and
$$ \omega_i = \#\{(a,b) \in \gf_{q} \times \gf_{q} : \delta_F(a,b) = i\},~0 \leqslant i \leqslant q,$$
where $\delta_F(a,b)=\#\{x\in\gf_q:F(x+a)-F(x)=b\}$. The differential spectrum of $F$ is defined as the multiset
$$\mathbb{S}_F =\left[\omega_0,\omega_1,...,\omega_{\Delta_F},\omega_{\Delta_F+1},\cdots,\omega_{q}\right].$$
\end{definition}

Sometimes we ignore the zeros in the differential spectrum. We remark that our definition of the differential spectrum is a little different from that in \cite{FurXia}. In our definition of $\omega_i$, we consider all the pairs $(a,b)\in \gf_{q} \times \gf_{q}$, including $a=0$. The values of $\omega_i (i>\Delta_F)$ can be obtained easily. That is, $\omega_{\Delta_F+1}=\cdots=\omega_{q-1}=0$ and $\omega_q=1$.  

From \cite{HRS99}, it is known that the differential spectrum of a power function satisfies several identities. It is natural to consider how it behaves with respect to the differential spectrum of any function. In this section, we give some identities of the differential spectrum of a general cryptographic function. Let $f$ be a polynomial over $\gf_q$ with differential uniformity $\Delta_f$.  We have the following theorem.
\begin{theorem}\label{thm:sol_eqts}
We have
	\begin{equation}
		\sum\limits_{i=0}^{q}\omega_i=q^2,
	\end{equation}
	\begin{equation}
		\sum\limits_{i=0}^{q}i\omega_i=q^2,
	\end{equation}
and
	\begin{equation}
		\sum\limits_{i=0}^{q}i^2\omega_i=N_4,
	\end{equation}
	where
	\begin{equation}
		N_4=\#\bigg\{(x_1,x_2,x_3,x_4)\in(\gf_q)^4:\left\{
			\begin{array}{l}
				x_1-x_2+x_3-x_4=0, \\
				f(x_1)-f(x_2)+f(x_3)-f(x_4)=0. \\
			\end{array}
			\right.\bigg\}.
	\end{equation}
\end{theorem}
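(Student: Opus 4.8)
The plan is to prove all three identities by elementary double counting, using only the fact that the numbers $\omega_i$ record how the $q^2$ pairs $(a,b)\in\gf_q\times\gf_q$ are distributed according to the value $\delta_f(a,b)\in\{0,1,\dots,q\}$. Throughout, I would keep everything at the level of counting solution triples/quadruples of the defining equation $f(x+a)-f(x)=b$, so that no structural properties of $f$ are needed.

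First I would establish the two easier identities. Since every pair $(a,b)$ has exactly one value $\delta_f(a,b)$, the sets $\{(a,b):\delta_f(a,b)=i\}$ for $0\leqslant i\leqslant q$ partition $\gf_q\times\gf_q$; summing their cardinalities gives $\sum_i\omega_i=q^2$, which is the first identity. For the second, I would interpret $\sum_i i\,\omega_i=\sum_{(a,b)}\delta_f(a,b)$ as the number of triples $(a,x,b)$ satisfying $f(x+a)-f(x)=b$. For each of the $q^2$ pairs $(a,x)$ the element $b$ is uniquely determined, so this count equals $q^2$.

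The third identity is the only one requiring a genuine (though still routine) bijection, and is where I expect the minor obstacle to lie. Writing $\sum_i i^2\omega_i=\sum_{(a,b)}\delta_f(a,b)^2$, I would view $\delta_f(a,b)^2$ as counting ordered pairs $(x,y)$ with $f(x+a)-f(x)=b$ and $f(y+a)-f(y)=b$; hence the whole sum counts quadruples $(a,b,x,y)$ subject to these two constraints. Eliminating $b$, which is forced to equal the common value $f(x+a)-f(x)=f(y+a)-f(y)$, the sum becomes the number of triples $(a,x,y)$ with $f(x+a)-f(x)=f(y+a)-f(y)$. The key step is then the substitution $(x_1,x_2,x_3,x_4)=(x+a,\,x,\,y,\,y+a)$, which I would check is a bijection between these triples and the solution set defining $N_4$: it automatically satisfies $x_1-x_2+x_3-x_4=0$ and sends the condition $f(x+a)-f(x)=f(y+a)-f(y)$ exactly to $f(x_1)-f(x_2)+f(x_3)-f(x_4)=0$, while the inverse recovers $a=x_1-x_2$, $x=x_2$, $y=x_3$ with $x_4=x_1-x_2+x_3$ pinned down by the linear constraint.

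This yields $\sum_i i^2\omega_i=N_4$ and completes the argument. The only care required is verifying that this correspondence is genuinely one-to-one and onto, i.e. confirming that the linear equation in the $N_4$ system is precisely what determines $x_4$ and hence makes the map invertible; once that is in place, the identity is immediate.
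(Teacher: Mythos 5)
Your proof is correct and follows essentially the same double-counting argument as the paper: the first two identities are handled identically, and your explicit bijection $(x_1,x_2,x_3,x_4)=(x+a,\,x,\,y,\,y+a)$ is precisely the correspondence underlying the paper's factorization $N_4=\sum_{(\alpha,\beta)}\delta_f(\alpha,\beta)^2$, where the paper groups quadruples by the common pair $(\alpha,\beta)$ and writes the count as a product of two DDT entries. No gaps; the inverse map you describe (recovering $a=x_1-x_2$, $x=x_2$, $y=x_3$, with $x_4$ pinned by the linear constraint) is exactly what makes the paper's product decomposition valid.
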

\begin{proof}
	According to the definition of $\omega_i$, we have
	$$\sum\limits_{i=0}^{q}\omega_i=\sum\limits_{i=0}^{q}\#\{(a,b)\in(\gf_q)^2:\delta_f(a,b)=i\}=q^2.$$
	The last equation holds since when $i$ runs through the integers in the range $[0,q]$, each $(a,b)\in(\gf_q)^2$ should occur.
	
	Besides, for a fixed $i$, there are $\omega_i$ distinct pairs $(a,b)$ such that $\delta_f(a,b)=i$, then we have
	$$i\omega_i=\sum\limits_{\delta_f(a,b)=i}\#\{x\in\gf_q:f(x+a)-f(x)=b\}.$$
	And for a given $a\in\gf_q$, \[\sum\limits_{b\in\gf_q}\#\{x\in\gf_q:f(x+a)-f(x)=b\}=\sum\limits_{x\in\gf_q}1=q\]
	since for any $x\in\gf_q$, there exists exactly one $b\in\gf_{q}$ satisfying $f(x+a)-f(x)=b$.
	Then
	\begin{align*}
		\sum\limits_{i=0}^{q}i\omega_i&=\sum\limits_{i=0}^{q}\sum\limits_{\substack{(a,b)\in(\gf_q)^2\\\delta_f(a,b)=i}}\#\{x\in\gf_q:f(x+a)-f(x)=b\}\\
		&=\sum\limits_{(a,b)\in(\gf_q)^2}\#\{x\in\gf_q:f(x+a)-f(x)=b\}\\
		&=\sum\limits_{a\in\gf_q}\sum\limits_{b\in\gf_q}\#\{x\in\gf_q:f(x+a)-f(x)=b\}\\
		&=\sum\limits_{a\in\gf_q}\sum\limits_{x\in\gf_q}1\\
		&=q^2.
	\end{align*}
	In the following, we prove the last statement. Note that
	$$\delta_f(\alpha,\beta)=\#\bigg\{(x_1,x_2)\in(\gf_q)^2:\left\{
		\begin{array}{l}
			x_1-x_2=\alpha, \\
			f(x_1)-f(x_2)=\beta. \\
		\end{array}
		\right.\bigg\},$$
since the number of solutions of $x_2$ of the equation $f(x_2+\alpha)-f(x_2)=\beta$ is $\delta_f(\alpha,\beta)$ and $x_1$ is uniquely determined by $x_2$.

	It is clear that
	\begin{align*}
		N_4&=\#\bigg\{(x_1,x_2,x_3,x_4)\in(\gf_q)^4:\left\{
			\begin{array}{l}
				x_1-x_2+x_3-x_4=0, \\
				f(x_1)-f(x_2)+f(x_3)-f(x_4)=0. \\
			\end{array}
			\right.\bigg\}\\
			&=\sum\limits_{(\alpha,\beta)\in(\gf_q)^2}\#\bigg\{(x_1,x_2,x_3,x_4)\in(\gf_q)^4:\left\{
				\begin{array}{l}
					x_1-x_2=x_4-x_3=\alpha, \\
					f(x_1)-f(x_2)=f(x_4)-f(x_3)=\beta. \\
				\end{array}
				\right.\bigg\}\\
			&=\sum\limits_{(\alpha,\beta)\in(\gf_q)^2}\#\bigg\{(x_1,x_2)\in(\gf_q)^2:\left\{
				\begin{array}{l}
					x_1-x_2=\alpha, \\
					f(x_1)-f(x_2)=\beta. \\
				\end{array}
				\right.\bigg\}\cdot\#\bigg\{(x_3,x_4)\in(\gf_q)^2:\left\{
				\begin{array}{l}
					x_4-x_3=\alpha, \\
					f(x_4)-f(x_3)=\beta. \\
				\end{array}
				\right.\bigg\}\\
			&=\sum\limits_{(\alpha,\beta)\in(\gf_q)^2}(\delta_f(\alpha,\beta))^2\\
			&=\sum\limits_{i=0}^{q}\sum\limits_{\substack{(\alpha,\beta)\in(\gf_q)^2\\\delta_f(\alpha,\beta)=i}}i^2\\
			&=\sum\limits_{i=0}^{q}i^2\omega_i.\\
	\end{align*}
	This completes the proof.
\end{proof}

\section{On the number of solutions of certain systems of equations}\label{sec:eqtsys_sols}
In this section, we determine the number of solutions of several systems of equations which are needed in Section \ref{sec:difspm}.

\begin{lemma}\label{lem:eqtsys_y_111}
	Let $p^n\equiv~3(mod~4)$. Let $\dot{N}_{(1,1,1)}$ denote the number of solutions $(y_1,y_2,y_3)\in(\gf_{p^n}^*)^3$ of the following system of equations
	\begin{equation}\label{eqt:eqtsys_y_111}
		\left\{
			\begin{array}{l}
				y_1-y_2+y_3-1=0,\\
				y_1^2-y_2^2+y_3^2-1=0, \\
			\end{array}
		\right.
	\end{equation}
	with $(\chi(y_1),\chi(y_2),\chi(y_3)) = (1,1,1)$. Then we have $\dot{N}_{(1,1,1)}=p^n-2$.
\end{lemma}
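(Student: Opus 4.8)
The plan is to solve the system \eqref{eqt:eqtsys_y_111} completely over $\gf_{p^n}$ first, and only afterwards impose the quadratic-character constraint. First I would use the linear equation to eliminate one variable: writing $y_2 = y_1 + y_3 - 1$ and substituting into the quadratic equation, the pure square terms cancel and one is left (using that $p$ is odd, so $2\neq 0$) with $y_1 + y_3 - y_1 y_3 - 1 = 0$, which factors as $(y_1 - 1)(y_3 - 1) = 0$. Consequently a triple $(y_1,y_2,y_3)$ solves the system if and only if $y_2 = y_1 + y_3 - 1$ together with $y_1 = 1$ or $y_3 = 1$; that is, the solution set over $\gf_{p^n}$ is exactly the union of the two parametrized families
\[
A = \{(1,t,t) : t \in \gf_{p^n}\}, \qquad B = \{(t,t,1) : t \in \gf_{p^n}\}.
\]

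Next I would impose the two standing restrictions, namely that each coordinate lie in $\gf_{p^n}^*$ and that $(\chi(y_1),\chi(y_2),\chi(y_3)) = (1,1,1)$. On the family $A$ the first coordinate is $1$, a square, so the condition collapses to $\chi(t) = 1$; the same reduction applies to $B$. Since $p^n \equiv 3 \pmod 4$ forces $p$ odd, the number of nonzero squares in $\gf_{p^n}^*$ equals $\frac{p^n-1}{2}$, and the requirement $\chi(t) = 1$ automatically guarantees $t \neq 0$. Hence each of $A$ and $B$ contributes exactly $\frac{p^n-1}{2}$ admissible triples, and I would note that the nonzero-coordinate requirement discards no further points, since a coordinate forced to be $1$ is nonzero and any coordinate with $\chi = 1$ is nonzero by definition.

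Finally I would apply inclusion–exclusion to handle the overlap. A triple lies in both families precisely when $(1,t,t) = (s,s,1)$, which forces $t = s = 1$ and hence the single point $(1,1,1)$; this point satisfies $\chi = 1$ in every coordinate, so it is counted once in each family. Therefore
\[
\dot N_{(1,1,1)} = \frac{p^n-1}{2} + \frac{p^n-1}{2} - 1 = p^n - 2,
\]
as claimed. The algebra is routine once the cancellation producing the factorization is spotted; the only genuine care is in the bookkeeping of this last step — correctly identifying that the two families meet in the lone point $(1,1,1)$ and that this point does meet the character constraint, so that inclusion–exclusion subtracts exactly one. I expect this overlap count, rather than the curve-splitting computation, to be the place where an error would most easily creep in.
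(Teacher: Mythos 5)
Your proof is correct and essentially matches the paper's: both arguments identify the full solution set as the union of the two families $(1,t,t)$ and $(t,t,1)$ (the paper via a case split on $y_3=1$ versus $y_3\neq1$, you via eliminating $y_2$ and factoring $(y_1-1)(y_3-1)=0$), count $\frac{p^n-1}{2}$ admissible points in each, and subtract the shared point $(1,1,1)$ to get $p^n-2$. The difference is purely mechanical, and your explicit inclusion--exclusion bookkeeping is sound.
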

\begin{proof}
	The system (\ref{eqt:eqtsys_y_111}) can be rewritten as 
	\begin{equation}\label{eqt:eqtsys_y_111_2}
	\left\{
		\begin{array}{l}
			y_1-y_2=1-y_3,\\
			y_1^2-y_2^2=1-y_3^2. \\
		\end{array}
	\right.
	\end{equation}
	If $y_3=1$, then we get $(y_1,y_2,y_3)=(y_2,y_2,1)$. Note that $(y_2,y_2,1)$ is a desired solution if and only if $\chi(y_2)=1$. Therefore, the number of such desired solutions is $\frac{p^n-1}{2}$. If $y_3\neq1$, then $y_1\neq y_2$, we have
	$$
	\left\{
		\begin{array}{l}
			y_1-y_2=1-y_3,\\
			y_1+y_2=1+y_3, \\
		\end{array}
	\right.
	$$
	whose solution is $(y_1,y_2,y_3)=(1,y_2,y_2)$. Similarly, the number of such desired solutions is $\frac{p^n-1}{2}$. Together with the two cases and removing one identical solution $(1,1,1)$, $\dot{N}_{(1,1,1)}=\frac{p^n-1}{2}+\frac{p^n-1}{2}-1=p^n-2$.
\end{proof}

\begin{lemma}\label{lem:eqtsys_y_m1m1m1}
	Let $p^n\equiv~3(mod~4)$. Let $\ddot{N}_{(-1,-1,-1)}$ denote the number of solutions $(y_1,y_2,y_3)\in(\gf_{p^n}^*)^3$ of the following system of equations
	\begin{equation}\label{eqt:eqtsys_y_m1m1m1}
		\left\{
			\begin{array}{l}
				y_1-y_2+y_3-1=0,\\
				y_1^2-y_2^2+y_3^2=0, \\
			\end{array}
		\right.
	\end{equation}
	with $(\chi(y_1),\chi(y_2),\chi(y_3)) = (-1,-1,-1)$. Then we have $\ddot{N}_{(-1,-1,-1)}=\frac{1}{8}\left(p^n+1+(\chi(2)-1)\lambda_{p,n}\right)$.
\end{lemma}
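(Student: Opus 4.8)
The plan is to eliminate one variable, reduce the count to a single-parameter character-sum computation, and then invoke the tailored sums of Lemma~\ref{lem:qsum_ds} together with Lemma~\ref{lem:qsum_deg2}.

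First I would use the linear equation to write $y_2 = y_1 + y_3 - 1$ and substitute into the quadratic equation. After simplification the quadratic collapses to the bilinear relation $2y_1 + 2y_3 - 2y_1y_3 - 1 = 0$. Checking $y_3 = 1$ forces $-1 = 0$, so there is no solution there; for $y_3 = t \neq 1$ I would solve for $y_1$ and back-substitute for $y_2$, obtaining the clean rational parametrization
\[
y_1 = \frac{t - \tfrac12}{t-1}, \qquad y_2 = \frac{t^2 - t + \tfrac12}{t-1}, \qquad y_3 = t .
\]
This gives a bijection between admissible triples and parameters $t$. I would then record that $t^2 - t + \tfrac12$ never vanishes (its discriminant is $-1$, a nonsquare since $p^n \equiv 3 \pmod 4$ gives $\chi(-1) = -1$), so $y_2 \neq 0$ always, while $y_3 = 0 \Leftrightarrow t = 0$ and $y_1 = 0 \Leftrightarrow t = \tfrac12$. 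Hence the triples with all coordinates in $\gf_{p^n}^*$ correspond exactly to $t \in \gf_{p^n}\setminus\{0,\tfrac12,1\}$.

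Next, I would encode the three nonsquare conditions with the indicator $\tfrac{1-\chi(y_i)}{2}$, which is legitimate precisely because each $y_i$ is nonzero on the restricted range of $t$, so that
\[
8\,\ddot{N}_{(-1,-1,-1)} = \sum_{t \in \gf_{p^n}\setminus\{0,\frac12,1\}} \bigl(1-\chi(y_1)\bigr)\bigl(1-\chi(y_2)\bigr)\bigl(1-\chi(y_3)\bigr).
\]
Expanding produces eight character sums in $t$. Using multiplicativity of $\chi$ and $\chi((t-1)^2)=1$ for $t\neq 1$, the single factors become $\chi((t-\tfrac12)(t-1))$, $\chi((t-1)(t^2-t+\tfrac12))$ and $\chi(t)$, while every pairwise and triple product simplifies to an integrand of Lemma~\ref{lem:qsum_ds}: for instance $\chi(y_1)\chi(y_3)=\chi(t(t-\tfrac12)(t-1))$, $\chi(y_2)\chi(y_3)=\chi(t(t-1)(t^2-t+\tfrac12))$, and $\chi(y_1)\chi(y_2)\chi(y_3)=\chi(t(t-\tfrac12)(t^2-t+\tfrac12))$. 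The degree-two sum of $\chi(y_1)$ is handled by Lemma~\ref{lem:qsum_deg2} (its discriminant is $\tfrac14\neq 0$, giving $-1$), and $\sum_t\chi(t)=0$.

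The main bookkeeping obstacle—and the only delicate point—is that Lemma~\ref{lem:qsum_ds} and Lemma~\ref{lem:qsum_deg2} furnish sums over all of $\gf_{p^n}$, whereas I need them over $\gf_{p^n}\setminus\{0,\tfrac12,1\}$. I would therefore evaluate each polynomial integrand at the three excluded points and subtract those contributions; this is where the stray $\chi(2)$ terms enter and must be tracked carefully. Collecting everything, the constant parts add to $p^n+1$, the $\chi(2)$ contributions cancel, and the $\lambda_{p,n}$ contributions combine to $(\chi(2)-1)\lambda_{p,n}$, yielding $\ddot{N}_{(-1,-1,-1)} = \tfrac{1}{8}\bigl(p^n+1+(\chi(2)-1)\lambda_{p,n}\bigr)$, as claimed. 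I expect the substitution, parametrization, and reduction of the products to Lemma~\ref{lem:qsum_ds} integrands to be routine; the error-prone step is the exact evaluation of the excluded-point corrections across all eight sums.
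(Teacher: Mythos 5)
Your proposal is correct and takes essentially the same route as the paper's own proof: the identical rational parametrization $y_1=\frac{t-\frac12}{t-1}$, $y_2=\frac{t^2-t+\frac12}{t-1}$, the same expansion of the three indicator factors $\frac{1-\chi(y_i)}{2}$ into eight character sums, and the same evaluation via Lemma~\ref{lem:qsum_deg2} and Lemma~\ref{lem:qsum_ds} with excluded-point corrections, whose net effect ($p^n+1$ from the constants, cancellation of the stray $\chi(2)$ terms, and $(\chi(2)-1)\lambda_{p,n}$ from the cubic sums) you state accurately. The only difference is cosmetic bookkeeping: you remove $t=\frac12$ from the parameter range explicitly (and verify $y_2\neq0$ via the nonsquare discriminant $-1$), whereas the paper sums over $y_3\notin\{0,1\}$ and implicitly relies on the indicator product vanishing at $y_3=\frac12$ because $(1+\chi(2))(1-\chi(2))=0$.
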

\begin{proof}
	It is easy to check that $y_3\neq1$ in (\ref{eqt:eqtsys_y_m1m1m1}), then we have 
	$$\left\{
		\begin{array}{l}
			y_1-y_2=1-y_3,\\
			y_1+y_2=-\frac{y_3^2}{1-y_3}. \\
		\end{array}
	\right.
	$$
	Thus, we obtain the solutions of (\ref{eqt:eqtsys_y_m1m1m1})
	$$\left\{
		\begin{array}{l}
			y_1=1+\frac{1}{2(y_3-1)},\\
			y_2=y_3+\frac{1}{2(y_3-1)}. \\
		\end{array}
	\right.
	$$
	Hence, $(y_1,y_2,y_3)$ is a desired solution if and only if 
	$$\chi\left(1+\frac{1}{2(y_3-1)}\right)=-1,~\chi\left(y_3+\frac{1}{2(y_3-1)}\right)=-1,~\chi(y_3)=-1.$$
	This implies that 
	\begin{align*}
		\ddot{N}_{(-1,-1,-1)}&=\frac{1}{8}\sum\limits_{y_3\in\gf_{p^n}^*,y_3\neq1}\left[1-\chi\left(1+\frac{1}{2(y_3-1)}\right)\right]\cdot\left[1-\chi\left(y_3+\frac{1}{2(y_3-1)}\right)\right]\cdot\left[1-\chi(y_3)\right]\\
		&=\frac{1}{8}\sum\limits_{y_3\in\gf_{p^n}^*,y_3\neq1}\left[1-\chi\left((y_3-1)\left(y_3-\frac{1}{2}\right)\right)\right]\left[1-\chi\left((y_3-1)\left(y_3^2-y_3+\frac{1}{2}\right)\right)\right][1-\chi(y_3)]\\
		&=\frac{1}{8}\Bigg[\sum\limits_{y_3\in\gf_{p^n}}1-\sum\limits_{y_3\in\gf_{p^n}}\chi(y_3)-\sum\limits_{y_3\in\gf_{p^n}}\chi\left((y_3-1)\left(y_3-\frac{1}{2}\right)\right)-\sum\limits_{y_3\in\gf_{p^n}}\chi\left((y_3-1)\left(y_3^2-y_3+\frac{1}{2}\right)\right) \\
		&~~~~~~~+\sum\limits_{y_3\in\gf_{p^n}}\chi\left((y_3-1)^2\left(y_3-\frac{1}{2}\right)\left(y_3^2-y_3+\frac{1}{2}\right)\right)+\sum\limits_{y_3\in\gf_{p^n}}\chi\left(y_3(y_3-1)\left(y_3^2-y_3+\frac{1}{2}\right)\right)\\
		&~~~~~~~+\sum\limits_{y_3\in\gf_{p^n}}\chi\left(y_3(y_3-1)\left(y_3-\frac{1}{2}\right)\right)-\sum\limits_{y_3\in\gf_{p^n}}\chi\left(y_3\left(y_3-\frac{1}{2}\right)(y_3-1)^2\left(y_3^2-y_3+\frac{1}{2}\right)\right)\Bigg]\\
		&=\frac{1}{8}\Bigg[p^n-\sum\limits_{y_3\in\gf_{p^n}}\chi\left((y_3-1)\left(y_3-\frac{1}{2}\right)\right)-\sum\limits_{y_3\in\gf_{p^n}}\chi\left((y_3-1)\left(y_3^2-y_3+\frac{1}{2}\right)\right) \\
		&~~~~~~~+\sum\limits_{y_3\in\gf_{p^n}}\chi\left(\left(y_3-\frac{1}{2}\right)\left(y_3^2-y_3+\frac{1}{2}\right)\right)-1+\sum\limits_{y_3\in\gf_{p^n}}\chi\left(y_3(y_3-1)\left(y_3^2-y_3+\frac{1}{2}\right)\right)\\
		&~~~~~~~+\sum\limits_{y_3\in\gf_{p^n}}\chi\left(y_3\left(y_3-\frac{1}{2}\right)(y_3-1)\right)-\sum\limits_{y_3\in\gf_{p^n}}\chi\left(y_3\left(y_3-\frac{1}{2}\right)\left(y_3^2-y_3+\frac{1}{2}\right)\right)+1\Bigg]\\
		&=\frac{1}{8}(p^n+1+(\chi(2)-1)\lambda_{p,n}).
	\end{align*}
	The last identity holds based on Lemma \ref{lem:qsum_deg2} and Lemma \ref{lem:qsum_ds}.
\end{proof}

\begin{lemma}\label{lem:eqtsys_y_xallmisus}
	Let $p^n\equiv~3(mod~4)$. Let $\dddot{N}_{(-1,-1,-1,-1)}$ denote the number of solutions $(y_1,y_2,y_3,y_4)\in(\gf_{p^n}^*)^4$ of the equation
	\begin{equation}\label{eqt:eqtsys_y_xallmisus}
		y_1-y_2+y_3-y_4=0,
	\end{equation}
	with $(\chi(y_1),\chi(y_2),\chi(y_3),\chi(y_4)) = (-1,-1,-1,-1)$. Then $\dddot{N}_{(-1,-1,-1,-1)}=\frac{1}{16}\left((p^n-1)\left(p^{2n}-2p^n+5\right)\right)$.
\end{lemma}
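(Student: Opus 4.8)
The plan is to reduce the four-variable count to a sum of squares of a two-variable counting function, and then to evaluate that function with the degree-two character sum formula from Lemma~\ref{lem:qsum_deg2}.

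First I would rewrite the constraint $y_1-y_2+y_3-y_4=0$ as $y_1+y_3=y_2+y_4$. For each $s\in\gf_{p^n}$, let $r(s)$ denote the number of ordered pairs $(a,b)\in(\gf_{p^n}^*)^2$ with $a+b=s$ and $\chi(a)=\chi(b)=-1$. Since the pairs $(y_1,y_3)$ and $(y_2,y_4)$ range independently over pairs of nonzero nonsquares summing to a common value $s$, we obtain
$$\dddot{N}_{(-1,-1,-1,-1)}=\sum_{s\in\gf_{p^n}}r(s)^2.$$

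Next I would evaluate $r(s)$. Writing the nonsquare indicator of a nonzero element as $\frac{1-\chi(\cdot)}{2}$ and expanding, for $s\neq0$ we have
$$r(s)=\frac{1}{4}\sum_{\substack{a\in\gf_{p^n}\\ a\neq0,\,s}}\bigl(1-\chi(a)\bigr)\bigl(1-\chi(s-a)\bigr).$$
Here the constant term contributes $p^n-2$; the two linear sums $\sum_{a\neq0,s}\chi(a)$ and $\sum_{a\neq0,s}\chi(s-a)$ each equal $-\chi(s)$ after removing the excluded points $a=0,s$; and the cross term $\sum_{a\neq0,s}\chi\bigl(a(s-a)\bigr)$ is handled by Lemma~\ref{lem:qsum_deg2} applied to $f(a)=-a^2+sa$, whose discriminant $s^2$ is nonzero, giving $-\chi(-1)=1$ since $\chi(-1)=-1$ (this is where $p^n\equiv3\pmod 4$ enters), the boundary terms $a=0,s$ vanishing. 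Collecting these gives $r(s)=\frac{p^n-1+2\chi(s)}{4}$ for $s\neq0$. The value $s=0$ must be treated separately: since $\chi(-a)=-\chi(a)$, no $a$ can make both $a$ and $-a$ nonsquares, so $r(0)=0$.

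Finally I would split $\sum_s r(s)^2$ according to the value of $\chi(s)$. Among the $s\neq0$ there are $\frac{p^n-1}{2}$ squares, where $r(s)=\frac{p^n+1}{4}$, and $\frac{p^n-1}{2}$ nonsquares, where $r(s)=\frac{p^n-3}{4}$, while $r(0)^2=0$. Hence
$$\dddot{N}_{(-1,-1,-1,-1)}=\frac{p^n-1}{2}\cdot\frac{(p^n+1)^2}{16}+\frac{p^n-1}{2}\cdot\frac{(p^n-3)^2}{16}=\frac{p^n-1}{32}\bigl((p^n+1)^2+(p^n-3)^2\bigr),$$
and since $(p^n+1)^2+(p^n-3)^2=2\bigl(p^{2n}-2p^n+5\bigr)$, the claimed value $\frac{1}{16}(p^n-1)(p^{2n}-2p^n+5)$ follows. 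The only genuinely delicate point — the main obstacle — is the separate analysis at $s=0$, where the clean formula for $r(s)$ fails precisely because $\chi(-1)=-1$ forces $r(0)=0$ rather than the naive $\frac{p^n-1}{4}$; one must likewise take care to subtract the boundary contributions $a=0$ and $a=s$ when evaluating both the linear sums and the quadratic cross term. Everything else is a routine application of Lemma~\ref{lem:qsum_deg2} and bookkeeping over the two character values.
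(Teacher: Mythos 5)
Your proof is correct, and it follows the same skeleton as the paper's: rewrite the linear condition so that two disjoint pairs of variables share a common parameter, count the fiber over each parameter value, and sum the squares of the fiber counts. The difference is in the grouping. The paper pairs by \emph{differences}, writing $y_1-y_2=y_4-y_3=\alpha$; there the diagonal fiber $\alpha=0$ is large (the $\frac{p^n-1}{2}$ pairs $(y,y)$ with $\chi(y)=-1$), while for every $\alpha\neq0$ a rescaling $z_i=y_i/\alpha$ shows the fiber count is uniformly $\frac{p^n-3}{4}$, independent of $\chi(\alpha)$, so the case analysis collapses to ``zero versus nonzero.'' You pair by \emph{sums}, writing $y_1+y_3=y_2+y_4=s$; this makes the zero fiber empty (your observation that $\chi(-1)=-1$ forces $r(0)=0$ is exactly right and is where $p^n\equiv3\pmod 4$ enters), but in exchange the nonzero fibers are no longer uniform --- they split as $r(s)=\frac{p^n-1+2\chi(s)}{4}$ according to $\chi(s)$, which you extract via the full indicator expansion and Lemma~\ref{lem:qsum_deg2} applied to $\chi\bigl(a(s-a)\bigr)$. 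Both routes are sound and of comparable length; the paper's rescaling trick buys fibers that do not depend on the character of the parameter (so no appeal to Lemma~\ref{lem:qsum_deg2} is needed for this lemma), while your version keeps the computation entirely mechanical at the cost of one extra quadratic character sum and a two-case split at the end. Your arithmetic checks out: $\sum_s r(s)^2=\frac{p^n-1}{2}\cdot\frac{(p^n+1)^2+(p^n-3)^2}{16}=\frac{1}{16}(p^n-1)(p^{2n}-2p^n+5)$, matching the paper, and a useful sanity check you could add is that $\sum_s r(s)=\left(\frac{p^n-1}{2}\right)^2$, the total number of ordered pairs of nonsquares.
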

\begin{proof}
We have
	\begin{align*}
		&\#\left\{(y_1,y_2,y_3,y_4)\in(\gf_{p^n}^*)^4:y_1-y_2+y_3-y_4=0\right\} \\
		=&\#\left\{(y_1,y_2,y_3,y_4)\in(\gf_{p^n}^*)^4:y_1-y_2=y_4-y_3\right\} \\
		=&\sum\limits_{\alpha\in\gf_{p^n}}\#\Bigg\{(y_1,y_2,y_3,y_4)\in(\gf_{p^n}^*)^4:\left\{
			\begin{array}{l}
				y_1-y_2=\alpha, \\
				y_4-y_3=\alpha. 
			\end{array}
		\right.
		\Bigg\} \\
		=&\sum\limits_{\alpha\in\gf_{p^n}}\#\left\{(y_1,y_2)\in(\gf_{p^n}^*)^2:y_1-y_2=\alpha\right\}\cdot\#\left\{(y_3,y_4)\in(\gf_{p^n}^*)^2:y_4-y_3=\alpha\right\} \\
		=&\sum\limits_{\alpha\in\gf_{p^n}}\left(\#\left\{(y_1,y_2)\in(\gf_{p^n}^*)^2:y_1-y_2=\alpha\right\}\right)^2. \\
	\end{align*}
In the following, for a given $\alpha\in\gf_{p^n}$, we discuss the number of solutions $(y_1,y_2)$ of the equation 
\[y_1-y_2=\alpha\]
with $\chi(y_1)=\chi(y_2)=-1$.
	\begin{enumerate}
		\item $\alpha=0$. Then the desired solutions of $y_1-y_2=\alpha$ should be  $(y_1,y_2)=(y_1,y_1)$ with $\chi(y_1)=-1$. Besides, the number of such solutions is $\frac{p^n-1}{2}$.
		\item $\alpha$ is a square element in $\gf_{p^n}^*$. Let $z_i=\frac{y_i}{\alpha} (i=1,2)$, then $\chi(z_1)=\chi(z_2)=-1$ and the equation $y_1-y_2=\alpha$ becomes $z_1-z_2=1$. Thus, we have 
		\[\#\left\{(z_1,z_2)\in(\gf_{p^n}^*)^2:z_1-z_2=1,\chi(z_1)=\chi(z_2)=-1\right\}=\#\left\{z_2\in\gf_{p^n}^*:\chi(z_2+1)=\chi(z_2)=-1\right\}\]
		\[=\frac{1}{4}\sum\limits_{\substack{z_2\in\gf_{p^n}^*,\\z_2\neq-1}}(1-\chi(z_2))(1-\chi(z_2+1))=\frac{p^n-3}{4}.\]
		\item $\alpha$ is a nonsquare element in $\gf_{p^n}^*$. The number of solutions $(y_1,y_2)$ with $\chi(y_1)=\chi(y_2)=-1$ of the equation $y_1-y_2=\alpha$ is also $\frac{p^n-3}{4}$. The proof is similar with 2) and we omit it.
	\end{enumerate}
In summary, we have $$\dddot{N}_{(-1,-1,-1,-1)}=\left(\frac{p^n-1}{2}\right)^2+2\cdot\frac{p^n-1}{2}\cdot\left(\frac{p^n-3}{4}\right)^2=\frac{1}{16}\left((p^n-1)(p^{2n}-2p^n+5)\right).$$
\end{proof}

\section{The differential spectrum of $f_1$}\label{sec:difspm}
Let $n$ be an odd integer, $p$ be an odd prime satisfying $p\equiv3~(mod~4)$. Recall that  $f_1(x)=x^{\frac{p^n+3}{2}}+x^{2}$ and $f_{-1}(x)=x^{\frac{p^n+3}{2}}-x^{2}$ are binomials over $\gf_{p^n}$. Note that $f_{-1}(x)=-f_1(-x)$, then we only study the differential properties of $f_1(x)$. In this section, we investigate the differential uniformity and the differential spectrum of $f_1$. The differential uniformity and the differential spectrum of $f_{-1}$ can be obtained directly and we omit them.

\subsection{The differential uniformity of $f_1$}\label{sec:subsec1}
In this subsection, our primary objective is to determine the differential uniformity of $f_1$, accompanied by a discussion of the number of potential solutions associated with the differential equation.
\begin{theorem}\label{uniformity}
	Let $n$ be an odd integer, $p$ be an odd prime with $p^n\equiv3~(mod~4)$. The differential uniformity of $f_1$ is $\frac{p^n+1}{4}$. Moreover, $\delta_{f_1}(a,b)\leq 2$ when $(a,b)\in(\gf_{p^n}^*)^2$.
\end{theorem}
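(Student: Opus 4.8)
The plan is to exploit the fact that, away from $0$, $f_1$ is a piecewise quadratic function. Since $p^n$ is odd, for $x\neq 0$ one has $x^{(p^n-1)/2}=\chi(x)$, and because $\frac{p^n+3}{2}=\frac{p^n-1}{2}+2$ we may write
$$f_1(x)=\left(1+\chi(x)\right)x^2$$
(with the convention $\chi(0)=0$, so that $f_1(0)=0$). Thus $f_1(x)=2x^2$ when $x$ is a nonzero square and $f_1(x)=0$ when $x$ is a nonsquare. I would then study the derivative equation $f_1(x+a)-f_1(x)=b$ for fixed $a\in\gf_{p^n}^*$ by splitting $\gf_{p^n}$ according to the pair of quadratic characters $(\chi(x),\chi(x+a))$, treating the two boundary points $x\in\{0,-a\}$ (where one of $x,x+a$ vanishes) separately.

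On each of the four generic regions the equation collapses to a low-degree polynomial equation: when $\chi(x)=\chi(x+a)=1$ it becomes the linear equation $4ax+2a^2=b$, contributing a single candidate $x=(b-2a^2)/(4a)$; when $\chi(x)=-1,\chi(x+a)=1$ it becomes $2(x+a)^2=b$; when $\chi(x)=1,\chi(x+a)=-1$ it becomes $-2x^2=b$; and when $\chi(x)=\chi(x+a)=-1$ it becomes $0=b$. The two mixed equations are governed by $\chi(b/2)$: exactly one of them can have an admissible root, since $\chi(-1)=-1$ singles out a unique square root that is itself a square, so together they contribute at most one solution. At the boundary points I would record the explicit values $f_1(a)=(1+\chi(a))a^2$ at $x=0$ and $-f_1(-a)=-(1-\chi(a))a^2$ at $x=-a$.

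For $b\neq 0$ the both-nonsquares region is empty, and the two boundary values are nonzero only for $b=2a^2$ (when $\chi(a)=1$) or $b=-2a^2$ (when $\chi(a)=-1$). Hence for $b\notin\{0,\pm 2a^2\}$ only the both-squares region and one mixed region contribute, each at most one solution, giving $\delta_{f_1}(a,b)\le 2$. I expect the delicate point, and the main obstacle, to be the values $b=\pm 2a^2$, where a boundary point does contribute: here one must verify that the candidate roots coming from the both-squares region and from the active mixed region collapse exactly onto the boundary point $x=0$ or $x=-a$ (and are therefore excluded from their regions), so that the total is in fact $1$ and the bound $\le 2$ is never violated. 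This requires a careful check, using $\chi(-1)=-1$ throughout, of which square root of $b/2$ or $-b/2$ is itself a square.

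Finally, to pin down the differential uniformity I would compute $\delta_{f_1}(a,0)$. The contribution of the both-nonsquares region is $\#\{x:\chi(x)=\chi(x+a)=-1\}$, which I would evaluate as $\tfrac14\sum_{x\neq 0,-a}(1-\chi(x))(1-\chi(x+a))$; expanding and applying Lemma \ref{lem:qsum_deg2} to the degree-two factor (giving $\sum_{x}\chi(x(x+a))=-1$), then subtracting the two boundary terms, yields $\frac{p^n-3}{4}$. Exactly one boundary point also solves $f_1(x+a)-f_1(x)=0$, namely $x=-a$ when $\chi(a)=1$ and $x=0$ when $\chi(a)=-1$, contributing $1$, so $\delta_{f_1}(a,0)=\frac{p^n+1}{4}$ for every $a\in\gf_{p^n}^*$. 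Combined with the bound $\delta_{f_1}(a,b)\le 2$ for $b\neq 0$ and the fact that $\frac{p^n+1}{4}\ge 2$ in the relevant range, this shows $\Delta_{f_1}=\frac{p^n+1}{4}$.
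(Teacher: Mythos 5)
Your proposal is correct and follows essentially the same route as the paper's own proof: the same reduction $f_1(x)=(1+\chi(x))x^2$, the same four-way split by $(\chi(x),\chi(x+a))$ with the boundary points $x\in\{0,-a\}$ handled separately, the same character-sum count $\frac{1}{4}\sum_{x\neq 0,-a}(1-\chi(x))(1-\chi(x+a))=\frac{p^n-3}{4}$ yielding $\delta_{f_1}(a,0)=\frac{p^n+1}{4}$, and the same exclusivity arguments (via $\chi(2b)$ and the products $x_1x_2=\frac{b}{2}$, $(x_1+a)(x_2+a)=-\frac{b}{2}$ together with $\chi(-1)=-1$) giving at most two solutions for $b\neq0$. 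If anything, you are slightly more careful than the paper at the values $b=\pm 2a^2$, where the paper leaves implicit that the Case I and Case III/IV candidates collapse onto the excluded boundary point; your explicit check there, plus the routine verification that Cases I, III, IV contribute nothing when $b=0$, completes the identical argument.
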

\begin{proof} It is obvious that $\delta_{f_1}(0,0)=p^n$ and $\delta_{f_1}(0,b)=0$ for $b\neq 0$. For any $(a,b)\in\gf_{p^n}^*\times\gf_{p^n}$, the differential equation $f_1(x+a)-f_1(x)=b$ becomes
\begin{equation}\label{eqt:diffeqt}
	(\chi(x+a)-\chi(x))x^2+2a(1+\chi(x+a))x+(1+\chi(x+a))a^2-b=0.
\end{equation}
When $x\notin\{0,-a\}$, we discuss (\ref{eqt:diffeqt}) in four cases shown in Table \ref{tab:FCases}, in which $x_1$ and $x_2$ denote the two solutions of the quadratic equations in Case III and Case IV. 
\begin{table}
    \centering
    \caption{List of Equations and Solutions}
    \begin{tabular}{|c|c|c|c|c|}
        \hline
        Case  & I &  II &  III &  IV \\
        \hline
        $(\chi(x+a),\chi(x))$ & $(1,1)$ & $(-1,-1)$ & $(-1,1)$ & $(1,-1)$\\
        \hline
        $Equation$ & $4ax+2a^2-b=0$ & $b=0$ & $2x^2+b=0$ & $2x^2+4ax+2a^2-b=0$ \\
        \hline
		$x$ & $\frac{b-2a^2}{4a}$ & $ $ & $\pm\sqrt{-\frac{b}{2}}$ & $\frac{-2a\pm\sqrt{2b}}{2}$ \\
		\hline
		$x+a$ & $\frac{b+2a^2}{4a}$ & $ $ & $a\pm\sqrt{-\frac{b}{2}}$ & $-a\pm a\sqrt{1+\frac{u-1}{ab}}$ \\
		\hline
		$x_1x_2$ & $ $ & $ $ & $\frac{b}{2}$ & $\frac{2a^2-b}{2}$ \\
		\hline
		$(x_1+a)(x_2+a)$ & $ $ & $ $ & $a^2+\frac{b}{2}$ & $-\frac{b}{2}$ \\
		\hline
    \end{tabular}
	\label{tab:FCases}
 \end{table}
\begin{enumerate}
	\item When $b=0$, $x=0$ is a solution of (\ref{eqt:diffeqt}) if and only if $\chi(a)=-1$, and $x=-a$ is a solution of equation (\ref{eqt:diffeqt}) if and only if $\chi(a)=1$. This indicates that for any $a\in\gf_{p^n}^*$, $f_1(x+a)-f_1(x)=0$ has exactly one solution in $\{0,-a\}$. For the remaining solutions not in $\{0,-a\}$, replacing $b$ by $0$ in Table \ref{tab:FCases}, it is effortless to check that $f_1(x+a)-f_1(x)=0$ has no solutions in Case I, Case III or Case IV. And for any $a\in\gf_{p^n}^*$, there are $$\frac{1}{4}\sum\limits_{x\in\gf_{p^n}^*\setminus\{-a\}}(1-\chi(x+a))(1-\chi(x))=\frac{p^n-3}{4}$$ solutions in Case II. In short, the equation $f_1(x+a)-f_1(x)=0$ has $\frac{p^n+1}{4}$ solutions in total and the number of such $(a,0)$ is $p^n-1$.
	\item When $b\neq0$, 
	\begin{enumerate}
		\item it is clear that (\ref{eqt:diffeqt}) cannot have solutions in both Cases III and IV simultaneously as $\chi(2b)$ cannot be both $-1$ and $1$ at the same time;
		\item if (\ref{eqt:diffeqt}) has two solutions in Case III, then $\chi(-\frac{b}{2})=1$ and $\chi(x_1x_2)=\chi(\frac{b}{2})=1$, which is a contradiction;
		\item (\ref{eqt:diffeqt}) has at most one solution in Case IV, otherwise both $\chi(2b)=1$ and $\chi(x_1+a)\chi(x_2+a)=\chi(-2b)=1$ would hold simultaneously, which is impossible with $\chi(-1)=-1$.
	\end{enumerate}
	From the discussion above, the equation $f_1(x+a)-f_1(x)=b$ has at most two solutions when $b\neq0$.
\end{enumerate}
We finish the proof. 
\end{proof}

\subsection{The value of $N_4$ pertaining to $f_1$}\label{sec:subsec2}
Based on the discussion in Subsection \ref{sec:subsec1}, to determine the differential spectrum of $f_1$, we are required to determine $\omega_0$, $\omega_1$ and $\omega_2$. Further, according to Theorem \ref{thm:sol_eqts} and the fact that $f_1(x)=(1+\chi(x))x^2$, we need to examine the solutions of the system of equations
	$$\left\{
		\begin{array}{l}
			x_1-x_2+x_3-x_4=0,\\
			(1+\chi(x_1))x_1^2-(1+\chi(x_2))x_2^2+(1+\chi(x_3))x_3^2-(1+\chi(x_4))x_4^2=0. \\
		\end{array}
		\right.
	$$

\begin{theorem}\label{thm:N4_f1}
	Let $N_4$ denote the number of solutions of the system of equations:
	\begin{equation}\label{eqt:eqtsys_org}
		\left\{
		\begin{array}{l}
			x_1-x_2+x_3-x_4=0,\\
			(1+\chi(x_1))x_1^2-(1+\chi(x_2))x_2^2+(1+\chi(x_3))x_3^2-(1+\chi(x_4))x_4^2=0. \\
		\end{array}
		\right.
	\end{equation}	
	Then $N_4=\frac{1}{16}\left((p^n-1)\left(p^{2n}+34p^n+17+4(\chi(2)-1)\lambda_{p,n}\right)\right)+1$.
\end{theorem}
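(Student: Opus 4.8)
The plan is to evaluate $N_4$ by partitioning the solution set of (\ref{eqt:eqtsys_org}) according to the sign pattern $(\chi(x_1),\chi(x_2),\chi(x_3),\chi(x_4))\in\{-1,0,1\}^4$. The key observation is that the second equation only ``sees'' the square entries: since $(1+\chi(x_i))x_i^2$ equals $2x_i^2$ when $\chi(x_i)=1$ and equals $0$ when $\chi(x_i)\in\{-1,0\}$, fixing a pattern turns the second equation into a signed combination of the $x_i^2$ over those $i$ with $\chi(x_i)=1$, while the first equation stays $x_1-x_2+x_3-x_4=0$. Throughout I use that $p^n\equiv3\pmod 4$ forces $\chi(-1)=-1$, so a nonzero element and its negative always have opposite characters. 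Several patterns contribute nothing: if exactly one index is a square the second equation reads $\pm2x_i^2=0$, forcing $x_i=0$; if two square indices lie both in $\{1,3\}$ or both in $\{2,4\}$ it reads $x_a^2+x_b^2=0$, i.e. $(x_a/x_b)^2=-1$, impossible; the same sign obstruction discards the ``three zeros, one nonsquare'' and the non-split ``two zeros'' subcases of the all-nonsquare-or-zero pattern.

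Next I would reduce the genuinely contributing patterns to the three lemmas of Section \ref{sec:eqtsys_sols} by scaling, using that the system is invariant under multiplying all $x_i$ by a common $\mu$ (which multiplies every $\chi(x_i)$ by $\chi(\mu)$). For the all-squares pattern $(1,1,1,1)$, normalizing by the \emph{square} $x_4$ via $y_i=x_i/x_4$ yields exactly Lemma \ref{lem:eqtsys_y_111}, contributing $\frac{p^n-1}{2}\dot{N}_{(1,1,1)}$. For a pattern with three squares and one nonsquare, say $(1,1,1,-1)$, normalizing by the \emph{nonsquare} $x_4$ flips the three square characters to $-1$ and produces precisely Lemma \ref{lem:eqtsys_y_m1m1m1} (the $x_4^2$-term having already vanished); the coordinate symmetries $x_1\leftrightarrow x_3$, $x_2\leftrightarrow x_4$ and $(x_1,x_2,x_3,x_4)\mapsto(x_2,x_1,x_4,x_3)$ make all four such patterns equivalent, contributing $4\cdot\frac{p^n-1}{2}\ddot{N}_{(-1,-1,-1)}$. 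Finally the all-nonsquare pattern $(-1,-1,-1,-1)$ has a vacuous second equation and is counted directly by Lemma \ref{lem:eqtsys_y_xallmisus}.

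The remaining contributions are elementary. For two squares in split positions (one in $\{1,3\}$, one in $\{2,4\}$) the second equation gives $x_a=\pm x_b$; only $x_a=x_b$ survives (again by $\chi(-1)=-1$), and the first equation then pins down the other pair, yielding $(p^n-1)^2+2(p^n-1)$ after summing over the four split placements and the $\{-1,0\}$ choices of the two remaining coordinates. In the all-nonsquare-or-zero family, after the vanishing subcases, I must count $x_1-x_2+x_3-x_4=0$ with one coordinate zero and the other three nonsquare; via $x_3=x_1t$ this reduces to $\frac{p^n-1}{2}\cdot\#\{t:\chi(t)=\chi(t+1)=1\}=\frac{(p^n-1)(p^n-3)}{8}$ by Lemma \ref{lem:qsum_deg2}, times the four positions of the zero coordinate, plus the ``two zeros'' count $2(p^n-1)$ and the single all-zero solution. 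Summing all pieces, substituting $\dot{N}_{(1,1,1)}=p^n-2$, $\ddot{N}_{(-1,-1,-1)}=\frac{1}{8}(p^n+1+(\chi(2)-1)\lambda_{p,n})$ and $\dddot{N}_{(-1,-1,-1,-1)}=\frac{1}{16}(p^n-1)(p^{2n}-2p^n+5)$, and clearing denominators by $16$, I expect the $(p^n-1)$-proportional terms to collect into $p^{2n}+34p^n+17+4(\chi(2)-1)\lambda_{p,n}$, with the lone $(0,0,0,0)$ solution supplying the additive $+1$, giving the claimed value.

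The main obstacle is the bookkeeping: correctly enumerating the $\{-1,0,1\}^4$ patterns, using the coordinate symmetries to identify which coincide, and — most delicately — tracking whether each normalizing factor is a square or a nonsquare, since scaling by a nonsquare is exactly what turns a three-square pattern into the all-nonsquare system of Lemma \ref{lem:eqtsys_y_m1m1m1}. Once this pattern-to-lemma dictionary is fixed, the final summation is a routine arithmetic verification.
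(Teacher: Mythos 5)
Your route is essentially the paper's, just with the bookkeeping reorganized: the paper first counts the zero-free solutions via the sixteen sign patterns in $\{\pm1\}^4$ — reducing $(1,1,1,1)$ to Lemma \ref{lem:eqtsys_y_111} by dividing by the square $x_4$, the four three-squares-one-nonsquare patterns to Lemma \ref{lem:eqtsys_y_m1m1m1} by the same nonsquare-normalization character flip you describe, killing the one-square and non-split two-square patterns by the identical $\pm2x_i^2=0$ and $x_a^2+x_b^2=0$ obstructions, and invoking Lemma \ref{lem:eqtsys_y_xallmisus} for $(-1,-1,-1,-1)$ — and only then treats solutions with $1,2,3,4$ zero coordinates separately, whereas you fold the zeros into a single $\{-1,0,1\}^4$ pattern analysis. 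The numbers are consistent with the paper's: its count $\mathcal{N}^{(2)}=4(p^n-1)$ of two-zero solutions splits into the $2(p^n-1)$ square pairs that you absorb into the split-placement total $(p^n-1)^2+2(p^n-1)$ and the $2(p^n-1)$ nonsquare pairs in your all-nonsquare-or-zero family; its $\mathcal{N}^{(1)}=\frac{1}{2}(p^n-1)(p^n-3)$ matches your four positions times $\frac{1}{8}(p^n-1)(p^n-3)$; and writing $q=p^n$, $L=(\chi(2)-1)\lambda_{p,n}$, your pieces sum to $\frac{1}{16}(q-1)\left(8(q-2)+4(q+1+L)+16(q+1)+(q^2-2q+5)+8(q-3)+32\right)+1=\frac{1}{16}(q-1)(q^2+34q+17+4L)+1$, as claimed.

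There is, however, one hole in your enumeration: the four patterns with exactly three squares and one zero, i.e.\ $(1,1,1,0)$ and its siblings $(1,1,0,1)$, $(1,0,1,1)$, $(0,1,1,1)$, are covered by none of your stated obstructions (they have neither exactly one square index, nor two squares both in $\{1,3\}$ or both in $\{2,4\}$, nor do they belong to the all-nonsquare-or-zero family, nor to your split-placement count, whose remaining coordinates you restrict to $\{-1,0\}$). These patterns do contribute zero, but that requires an argument — the paper's case 3(a): with $x_4=0$ and $\chi(x_1)=\chi(x_2)=\chi(x_3)=1$, the system becomes $x_1-x_2+x_3=0$, $x_1^2-x_2^2+x_3^2=0$; dividing by the square $x_3$ and substituting $y_2=y_1+1$ yields $y_1^2-(y_1+1)^2+1=-2y_1=0$, forcing $y_1=0$, a contradiction. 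Since the contribution vanishes, your final arithmetic is unaffected and the result stands; but as written the pattern-to-lemma dictionary is incomplete, and this is precisely the kind of case your own closing caveat about bookkeeping warns against, so it should be added explicitly.
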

\begin{proof}
For a solution $(x_1,x_2,x_3,x_4)\in(\gf_{p^n})^4$ of (\ref{eqt:eqtsys_org}), let $\mathcal{N}^{(i)}$ denote the number of solutions containing $i$ zeros, where $0\leqslant i\leqslant4$. In the first place, we are trying to evaluate $\mathcal{N}^{(0)}$. Let $\mathcal{N}_{(i,j,k,l)}$ denote the number of solutions $(x_1,x_2,x_3,x_4)\in(\gf_{p^n}^*)^4$ of the system (\ref{eqt:eqtsys_org}) when $(\chi(x_1),\chi(x_2),\chi(x_3),\chi(x_4))=(i,j,k,l), i,j,k,l\in\{\pm1\}$. Then we have $\mathcal{N}^{(0)}=\sum\limits_{i,j,k,l\in\{\pm1\}}\mathcal{N}_{(i,j,k,l)}$. Next, we compute $\mathcal{N}_{(i,j,k,l)}$ in $16$ cases presented below.
\begin{enumerate}
	\item When $(\chi(x_1),\chi(x_2),\chi(x_3),\chi(x_4))=(1,1,1,1)$, the system (\ref{eqt:eqtsys_org}) can be reduced to
		$$\left\{
		\begin{array}{l}
			x_1-x_2+x_3-x_4=0,\\
			x_1^2-x_2^2+x_3^2-x_4^2=0. \\
		\end{array}
		\right.
		$$
		Set $y_i=\frac{x_i}{x_4}$, then we need to calculate the number of solutions of 
		$$
			\left\{
				\begin{array}{l}
					y_1-y_2+y_3-1=0,\\
					y_1^2-y_2^2+y_3^2-1=0, \\
				\end{array}
			\right.
		$$
		where $\chi(y_1)=\chi(y_2)=\chi(y_3)=1$.
		According to Lemma \ref{lem:eqtsys_y_111}, the number of the above system is $p^n-2$. Combined with the condition that $\chi(x_4)=1$ and noting that the number of such $x_4$ is $\frac{p^n-1}{2}$, we can obtain that $\mathcal{N}_{(1,1,1,1)}=\frac{1}{2}(p^n-1)(p^n-2)$.
	\item In this case, we consider there is exactly one nonsquare element among $x_1,x_2,x_3$ and $x_4$. 
		\begin{enumerate}
			\item When $(\chi(x_1),\chi(x_2),\chi(x_3),\chi(x_4))=(1,1,1,-1)$, the system (\ref{eqt:eqtsys_org}) can be reduced to
			$$\left\{
			\begin{array}{l}
				x_1-x_2+x_3-x_4=0,\\
				x_1^2-x_2^2+x_3^2=0. \\
			\end{array}
			\right.
			$$
			Set $y_i=\frac{x_i}{x_4}$, then we need to calculate the number of solutions of 
			$$
				\left\{
					\begin{array}{l}
						y_1-y_2+y_3-1=0,\\
						y_1^2-y_2^2+y_3^2=0, \\
					\end{array}
				\right.
			$$
			with $(\chi(y_1),\chi(y_2),\chi(y_3))=(-1,-1,-1)$.
			According to Lemma \ref{lem:eqtsys_y_m1m1m1}, the number of solutions of the above system  is $\frac{1}{8}(p^n+1+(\chi(2)-1)\lambda_{p,n})$. Combined with the condition that $\chi(x_4) = -1$ and noting that the number of such $x_4$ is $\frac{p^n-1}{2}$, we can obtain that $\mathcal{N}_{(1,1,1,-1)}=\frac{1}{16}\left((p^n-1)(p^n+1+(\chi(2)-1)\lambda_{p,n})\right)$.
			\item When $(\chi(x_1),\chi(x_2),\chi(x_3),\chi(x_4))=(1,1,-1,1)$, the system (\ref{eqt:eqtsys_org}) can be reduced to
			$$\left\{
			\begin{array}{l}
				x_1-x_2+x_3-x_4=0,\\
				x_1^2-x_2^2-x_4^2=0. \\
			\end{array}
			\right.
			$$
			This system of equations is the same as
			$$\left\{
			\begin{array}{l}
				x_2-x_1+x_4-x_3=0,\\
				x_2^2-x_1^2+x_4^2=0. \\
			\end{array}
			\right.
			$$
			By a simple comparison, we have $\mathcal{N}_{(1,1,-1,1)}=\mathcal{N}_{(1,1,1,-1)}$. In the same manner, $\mathcal{N}_{(-1,1,1,1)}=\mathcal{N}_{(1,1,-1,1)}=\mathcal{N}_{(1,-1,1,1)}=\mathcal{N}_{(1,1,1,-1)}$ can be deduced.
		\end{enumerate}
		In short, we have $\mathcal{N}_{(1,1,1,-1)}=\mathcal{N}_{(1,1,-1,1)}=\mathcal{N}_{(1,-1,1,1)}=\mathcal{N}_{(-1,1,1,1)}=\frac{1}{16}\big((p^n-1)(p^n+1+(\chi(2)-1)\lambda_{p,n})\big)$.
	\item In this case, we consider there are exactly two nonsquare elements among $x_1,x_2,x_3$ and $x_4$. 
		\begin{enumerate}
		\item When $(\chi(x_1),\chi(x_2),\chi(x_3),\chi(x_4))=(1,1,-1,-1)$, the system (\ref{eqt:eqtsys_org}) can be reduced to
		$$\left\{
		\begin{array}{l}
			x_1-x_2+x_3-x_4=0,\\
			x_1^2-x_2^2=0. \\
		\end{array}
		\right.
		$$
		From the second equation above, we can obtain that $x_1=x_2$ since $\chi(x_1)=\chi(x_2)$ and $\chi(-1)=-1$. Then the solutions of this system of equations is $(x_1,x_1,x_3,x_3)$. Combined with the condition that $\chi(x_1)=1$ and $\chi(x_3)=-1$, and noting that the number of such $x_1$ and $x_3$ is each $\frac{p^n-1}{2}$, we can obtain that $\mathcal{N}_{(1,1,-1,-1)}=\frac{1}{4}(p^n-1)^2$. 
		\item Since $x_1$ and $x_3$ have the same status in the system (\ref{eqt:eqtsys_org}) and so do $x_2$ and $x_4$, it follows that $\mathcal{N}_{(-1,-1,1,1)}=\mathcal{N}_{(1,-1,-1,1)}=\mathcal{N}_{(-1,1,1,-1)}=\mathcal{N}_{(1,1,-1,-1)}$.
		\item When $(\chi(x_1),\chi(x_2),\chi(x_3),\chi(x_4))=(1,-1,1,-1)$, the system (\ref{eqt:eqtsys_org}) can be reduced to
		$$\left\{
		\begin{array}{l}
			x_1-x_2+x_3-x_4=0,\\
			x_1^2+x_3^2=0. \\
		\end{array}
		\right.
		$$
		Obviously, $x_1^2+x_3^2=0$ has no solution when $x_1\neq0$, $x_3\neq0$, which means $\mathcal{N}_{(1,-1,1,-1)}=0$. Besides, it is easy to check that $\mathcal{N}_{(-1,1,-1,1)}=0$.
		\end{enumerate}
	In short, we have $\mathcal{N}_{(1,1,-1,-1)}=\mathcal{N}_{(-1,-1,1,1)}=\mathcal{N}_{(1,-1,-1,1)}=\mathcal{N}_{(-1,1,1,-1)}=\frac{1}{4}(p^n-1)^2$, $\mathcal{N}_{(1,-1,1,-1)}=\mathcal{N}_{(-1,1,-1,1)}=0$.
	\item In this case, we consider there are exactly three nonsquare elements among $x_1,x_2,x_3$ and $x_4$. Suppose that $(\chi(x_1),\chi(x_2),\chi(x_3),\chi(x_4))=(1,-1,-1,-1)$, then the second equality in (\ref{eqt:eqtsys_org}) becomes $2x_1^2=0$. It follows that $x_1=0$, a contradiction to $x_i\neq0$ for $1\leqslant i\leqslant4$. Thus, $\mathcal{N}_{(1,-1,-1,-1)}=0$. By the same procedure, the desired result follows. Thus, we have $\mathcal{N}_{(1,-1,-1,-1)}=\mathcal{N}_{(-1,1,-1,-1)}=\mathcal{N}_{(-1,-1,1,-1)}=\mathcal{N}_{(-1,-1,-1,1)}=0$.
	\item When $(\chi(x_1),\chi(x_2),\chi(x_3),\chi(x_4))=(-1,-1,-1,-1)$, the system (\ref{eqt:eqtsys_org}) can be reduced to
		$$x_1-x_2+x_3-x_4=0.$$
		According to Lemma \ref{lem:eqtsys_y_xallmisus}, it follows that $\mathcal{N}_{(-1,-1,-1,-1)}=\frac{1}{16}\left((p^n-1)(p^{2n}-2p^n+5)\right)$.
\end{enumerate}
Above all, we have 
	\begin{align*}
	\mathcal{N}^{(0)}&=\sum\limits_{(i,j,k,l)\in\{\pm1\}^4}\mathcal{N}_{(i,j,k,l)}\\
	&=\mathcal{N}_{(1,1,1,1)}+4\mathcal{N}_{(1,1,1,-1)}+4\mathcal{N}_{(1,1,-1,-1)}+2\mathcal{N}_{(1,-1,1,-1)}+4\mathcal{N}_{(1,-1,-1,-1)}+\mathcal{N}_{(-1,-1,-1,-1)}\\
	&=\frac{(p^n-1)(p^n-2)}{2}+\frac{1}{2}(p^n-1)(p^n+1+(\chi(2)-1)\lambda_{p,n})+(p^n-1)^2+\frac{1}{16}\left((p^n-1)(p^{2n}-2p^n+5)\right)\\
	&=\frac{1}{16}\left((p^n-1)(p^{2n}+26p^n-23+4(\chi(2)-1)\lambda_{p,n})\right).
	\end{align*}
In the following, we consider the cases that there exists some $x_i=0$, where $i\in\{1,2,3,4\}$ to evaluate $\mathcal{N}^{(i)}$ with $1\leqslant i\leqslant4$ as follows.
\begin{enumerate}
	\item Obviously, $(0,0,0,0)$ is a solution of (\ref{eqt:eqtsys_org}).  Suppose that there are exactly three variables $x_i$ taking the value $0$ in a solution, it can be deduced that the solution must be $(0,0,0,0)$. This implies (\ref{eqt:eqtsys_org}) can not have a solution with exactly three variables being $0$ and the rest one being nonzero. Thus, we have $\mathcal{N}^{(4)}=1$ and $\mathcal{N}^{(3)}=0$.
	\item In this case, we consider the condition that there are exactly two variables $x_i$ taking the value $0$ in a solution. If $x_1=x_2=0$, then the system (\ref{eqt:eqtsys_org}) can be reduced to
	$$\left\{
	\begin{array}{l}
		x_3-x_4=0,\\
		(1+\chi(x_3))x_3^2-(1+\chi(x_4))x_4^2=0. \\
	\end{array}
	\right.
	$$
	Then we obtain the solutions in the form of $(0,0,x,x)$ with $x\in\gf_{p^n}^*$ and the number of such solutions is $p^n-1$. By the same token, the quadruples in the form of $(x,x,0,0)$, $(0,x,x,0)$ and $(x,0,0,x)$ are all solutions and the number of each is $p^n-1$. In short, $\mathcal{N}^{(2)}=4(p^n-1)$.
	\item In this case, we consider the condition that there is exactly one variable $x_i$ taking the value 0 in a solution. If $x_4=0$, then the system (\ref{eqt:eqtsys_org}) can be reduced to
	\begin{equation}\label{eqt:eqtsys_one0}
	\left\{
	\begin{array}{l}
		x_1-x_2+x_3=0,\\
		(1+\chi(x_1))x_1^2-(1+\chi(x_2))x_2^2+(1+\chi(x_3))x_3^2=0. \\
	\end{array}
	\right.
	\end{equation}
	For a solution $(x_1,x_2,x_3)\in(\gf_{p^n}^*)^3$ of (\ref{eqt:eqtsys_one0}), let $\mathtt{N}_{(i,j,k)}$ denote the number of solutions $(x_1,x_2,x_3)\in(\gf_{p^n}^*)^3$ of the system of equations (\ref{eqt:eqtsys_one0}) when $(\chi(x_1),\chi(x_2),\chi(x_3))=(i,j,k), i,j,k\in\{\pm1\}$. Next, we examine the solutions of this system (\ref{eqt:eqtsys_one0}) in the following eight cases.
	\begin{enumerate}
		\item When $(\chi(x_1),\chi(x_2),\chi(x_3))=(1,1,1)$, the system (\ref{eqt:eqtsys_one0}) can be reduced to
			$$\left\{
				\begin{array}{l}
					x_1-x_2+x_3=0,\\
					x_1^2-x_2^2+x_3^2=0. \\
				\end{array}
			\right.
			$$
			Let $y_i=\frac{x_i}{x_3}$, we shall calculate the number of solutions of
			$$\left\{
				\begin{array}{l}
					y_1-y_2+1=0,\\
					y_1^2-y_2^2+1=0, \\
				\end{array}
			\right.
			$$
			with $\chi(y_1)=\chi(y_2)=1$. Solving the equations above we get $y_1=0$, which contradicts to $\chi(y_1)=1$. Therefore, $\mathtt{N}_{(1,1,1)}=0$.
		\item Here we discuss that there is exactly one nonsquare element among $x_1,x_2,x_3$. 
			\begin{enumerate}
			\item When $(\chi(x_1),\chi(x_2),\chi(x_3))=(1,1,-1)$, (\ref{eqt:eqtsys_one0}) can be reduced to
			$$\left\{
				\begin{array}{l}
					x_1-x_2+x_3=0,\\
					x_1^2-x_2^2=0. \\
				\end{array}
			\right.
			$$
			Solving the system of equations above we get $x_3=0$, which is a contradiction. Therefore, $\mathtt{N}_{(1,1,-1)}=0$. Similarly, we have $\mathtt{N}_{(-1,1,1)}=0$.\\
			\item When $(\chi(x_1),\chi(x_2),\chi(x_3))=(1,-1,1)$,  the system (\ref{eqt:eqtsys_one0}) can be reduced to
			$$\left\{
				\begin{array}{l}
					x_1+x_3=0,\\
					x_1^2+x_3^2=0. \\
				\end{array}
			\right.
			$$
			Since $x_1^2+x_3^2=0$ has no solution in $(\gf_{p^n}^*)^2$, we have $\mathtt{N}_{(1,-1,1)}=0$.
			\end{enumerate}
		Thus, we have $\mathtt{N}_{(1,1,-1)}=\mathtt{N}_{(1,-1,1)}=\mathtt{N}_{(-1,1,1)}=0$.
		\item Here we discuss that there are exactly two nonsquare elements among $x_1,x_2,x_3$. When $(\chi(x_1),\chi(x_2),\chi(x_3))=(1,-1,-1)$, the second equation of (\ref{eqt:eqtsys_one0}) becomes $x_1^2=0$. It follows that $x_1=0$, a contradiction to $x_1\in\gf_{p^n}^*$. Thus, $\mathtt{N}_{(1,-1,-1)}=0$ and similarly, we can get $\mathtt{N}_{(-1,1,-1)}=\mathtt{N}_{(-1,-1,1)}=0$.
		\item When $(\chi(x_1),\chi(x_2),\chi(x_3))=(-1,-1,-1)$, (\ref{eqt:eqtsys_one0}) can be reduced to
			$$x_1-x_2+x_3=0.$$
		Set $y_i=\frac{x_i}{x_3}$, then $\chi(y_1)=\chi(y_2)=1$ and we shall consider the equation below for the first step
			$$y_1-y_2+1=0.$$
		Therefore, $(y_1,y_2)$ is a desired solution if and only if 
		$$\chi(y_1)=1~\text{and}~\chi(y_1+1)=1.$$
		And the number of such solutions is
				$$\frac{1}{4}\sum\limits_{y_1\in\gf_{p^n}^*,y_1\neq-1}(1+\chi(y_1))\cdot(1+\chi(y_1+1))=\frac{p^n-3}{4}.$$
		Therefore, $\mathtt{N}_{(-1,-1,-1)}=\frac{p^n-1}{2}\cdot\frac{p^n-3}{4}=\frac{1}{8}(p^n-3)(p^n-1)$.
	\end{enumerate}
	Based on the analysis above, when $x_4=0$, (\ref{eqt:eqtsys_org}) has $\frac{1}{8}(p^n-3)(p^n-1)$ solutions containing exactly one zero. As $x_2$ and $x_4$ share the same status in (\ref{eqt:eqtsys_org}), when $x_2=0$, (\ref{eqt:eqtsys_org}) has $\frac{1}{8}(p^n-3)(p^n-1)$ such solutions. When $x_3=0$, the system (\ref{eqt:eqtsys_org}) can be reduced to
		$$\left\{
				\begin{array}{l}
					x_1-x_2-x_4=0, \\
					(1+\chi(x_1))x_1^2-(1+\chi(x_2))x_2^2-(1+\chi(x_4))x_4^2=0.
				\end{array}
			\right.
		$$
		This system is equivalent to
		$$\left\{
				\begin{array}{l}
					x_2-x_1+x_4=0, \\
					(1+\chi(x_2))x_2^2-(1+\chi(x_1))x_1^2+(1+\chi(x_4))x_4^2=0.
				\end{array}
			\right.
		$$
		By a simple comparison, the number of solutions of (\ref{eqt:eqtsys_org}) when $x_1=0$ equals that of (\ref{eqt:eqtsys_org}) when $x_4=0$. Given the equivalent role of $x_1$ and $x_3$, the same conclusion holds when any solution features solely $x_3=0$. In short, we have $\mathcal{N}^{(1)}=\frac{1}{2}(p^n-3)(p^n-1)$.
\end{enumerate}
In conclusion, we have 
\begin{align*}
N_4&=\sum\limits_{i=0}^4\mathcal{N}^{(i)} \\
&=\frac{1}{16}\left((p^n-1)\left(p^{2n}+26p^n-23+4(\chi(2)-1)\lambda_{p,n}\right)\right)+\frac{1}{2}\left((p^n-3)(p^n-1)\right)+4(p^n-1)+0+1 \\
&=\frac{1}{16}\left((p^n-1)\left(p^{2n}+34p^n+17+4(\chi(2)-1)\lambda_{p,n}\right)\right)+1.
\end{align*}
\end{proof}

\subsection{The differential spectrum of $f_1$}
Based on the analysis in subsections \ref{sec:subsec1} and \ref{sec:subsec2}, we present the differential spectrum of $f_1$ as follows.
\begin{theorem}\label{thm:DifSpm}
	Let $p^n\equiv3~(mod~4)$. The differential spectrum of $f_1(x)=x^{2}+x^{\frac{p^n+3}{2}}$ over $\gf_{p^n}$ is
	\begin{align*}
		\mathbb{S}_{f_1}=\Big[
			&\omega_0=\frac{1}{8}\left((p^n-1)\left(3p^n+3+(\chi(2)-1)\lambda_{p,n}\right)\right), \\
			&\omega_1=\frac{1}{4}\left((p^n-1)\left(2p^n-2-(\chi(2)-1)\lambda_{p,n}\right)\right), \\
			&\omega_2=\frac{1}{8}\left((p^n-1)\left(p^n+1+(\chi(2)-1)\lambda_{p,n}\right)\right),\\
			&\omega_{\frac{p^n+1}{4}}=(p^n-1),\\
			&\omega_{p^n}=1
		\Big].
	\end{align*}
\end{theorem}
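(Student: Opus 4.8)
The plan is to determine the three still-unknown entries $\omega_0$, $\omega_1$, $\omega_2$ by feeding the value of $N_4$ from Theorem~\ref{thm:N4_f1} into the three moment identities of Theorem~\ref{thm:sol_eqts}. The first step is to pin down the full support of the differential spectrum. By Theorem~\ref{uniformity}, $\delta_{f_1}(a,b)$ equals $p^n$ exactly for $(a,b)=(0,0)$, so $\omega_{p^n}=1$; it equals $0$ for each of the $p^n-1$ pairs $(0,b)$ with $b\neq0$; for every $a\in\gf_{p^n}^*$ the equation $f_1(x+a)-f_1(x)=0$ has precisely $\frac{p^n+1}{4}$ solutions, so $\delta_{f_1}(a,0)=\frac{p^n+1}{4}$ for all $p^n-1$ such pairs, giving $\omega_{\frac{p^n+1}{4}}=p^n-1$; and for all remaining pairs $(a,b)\in(\gf_{p^n}^*)^2$ one has $\delta_{f_1}(a,b)\in\{0,1,2\}$. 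Consequently the only nonzero entries of $\mathbb{S}_{f_1}$ sit at the indices $0,1,2,\frac{p^n+1}{4},p^n$, two of which are already known.

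With the support reduced to five indices and the values at $\frac{p^n+1}{4}$ and $p^n$ in hand, the three identities of Theorem~\ref{thm:sol_eqts} collapse to a linear system in $\omega_0,\omega_1,\omega_2$. Writing $q=p^n$ and $D=\frac{q+1}{4}$ and substituting $\omega_q=1$, $\omega_{D}=q-1$, the identities become
\begin{align*}
\omega_0+\omega_1+\omega_2 &= q^2-q,\\
\omega_1+2\omega_2 &= q^2-q-D(q-1),\\
\omega_1+4\omega_2 &= N_4-q^2-D^2(q-1).
\end{align*}
I would solve this in the obvious order: subtracting the second equation from the third isolates $2\omega_2$ and yields $\omega_2$ once the closed form of $N_4$ is inserted; back-substitution into the second equation gives $\omega_1$, and the first equation then gives $\omega_0$.

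The only genuine work is the final simplification. After substituting $N_4=\frac{1}{16}\big((p^n-1)(p^{2n}+34p^n+17+4(\chi(2)-1)\lambda_{p,n})\big)+1$ and $D=\frac{p^n+1}{4}$, one must expand the polynomials in $p^n$ and track the character-sum term $(\chi(2)-1)\lambda_{p,n}$, verifying that the three expressions collapse to the claimed closed forms. I expect this bookkeeping---in particular checking that the $\lambda_{p,n}$ contributions distribute correctly across $\omega_0$, $\omega_1$, $\omega_2$---to be the main, though entirely routine, obstacle. Two remarks guard the argument: the five support indices $0,1,2,\frac{p^n+1}{4},p^n$ are pairwise distinct precisely when $p^n\geqslant11$, so the boundary value $p^n=7$ (where $\frac{p^n+1}{4}=2$ merges with the index $2$) should be confirmed separately by direct computation; and a numerical sanity check of the three closed forms against the DDT over a small field such as $\gf_{11}$ provides a safeguard against arithmetic slips.
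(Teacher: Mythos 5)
Your proposal is correct and follows essentially the same route as the paper: it pins down the support $\{0,1,2,\frac{p^n+1}{4},p^n\}$ via Theorem~\ref{uniformity}, substitutes $\omega_{p^n}=1$, $\omega_{\frac{p^n+1}{4}}=p^n-1$ and the value of $N_4$ from Theorem~\ref{thm:N4_f1} into the three moment identities of Theorem~\ref{thm:sol_eqts}, and solves the resulting $3\times3$ linear system for $\omega_0,\omega_1,\omega_2$. Your written version of that system is in fact cleaner than the paper's (whose last displayed equation omits a minus sign before $q^2\omega_q$), and your remark about the merged indices at $p^n=7$ (and implicitly $p^n=3$) matches what the paper handles in a subsequent remark.
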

\begin{proof}
	By Theorem \ref{thm:sol_eqts}, we have the following system of equations pertaining to $f_1$
	\begin{equation}\label{subsec_difspm:eqtsys_f1}
	\left\{
		\begin{array}{l}
			\sum\limits_{i=0}^{q}\omega_i=q^2, \\
			\sum\limits_{i=0}^{q}i\omega_i=q^2, \\
			\sum\limits_{i=0}^{q}i^2\omega_i=N_4.
		\end{array}
	\right.
	\end{equation}
It is obvious that $\omega_q=1$. By Theorem \ref{uniformity}, we have $\omega_{\frac{p^n+1}{4}}=p^n-1$,  and $\omega_i=0$ for $3\leqslant i\leqslant q-1$, $i\neq \frac{p^n+1}{4}$. The value of $N_4$ was determined in Theorem \ref{thm:N4_f1}. By substituting certain values, the system (\ref{subsec_difspm:eqtsys_f1}) can be rewritten as follows
	$$\left\{
		\begin{array}{l}
			\omega_0+\omega_1+\omega_2=q^2-\omega_{\frac{p^n+1}{4}}-\omega_q, \\
			\omega_1+2\omega_2=q^2-\left(\frac{p^n+1}{4}\right)\omega_{\frac{p^n+1}{4}}-q\omega_q, \\
			\omega_1+2^2\omega_2=N_4-\left(\frac{p^n+1}{4}\right)^2\omega_{\frac{p^n+1}{4}}q^2\omega_q.
		\end{array}
	\right.
	$$
	The desired result follows by solving the above system. We finish the proof.
\end{proof}

\begin{remark}\label{rmk:spmwithchi2}
	Note that $n$ is odd when $p^n\equiv3~(mod~4)$. Thus, when $p\equiv7~(mod~8)$, we have $\chi(2)=1$, then the differential spectrum of $f_1$ can be expressed without $\lambda_{p,n}$, which is
	\begin{align*}
		\bigg[
			\omega_0=\frac{3}{8}\left(p^{2n}-1\right),~
			\omega_1=\frac{1}{2}(p^n-1)^2,~
			\omega_2=\frac{1}{8}\left(p^{2n}-1\right),~
			\omega_{\frac{p^n+1}{4}}=p^n-1,~ 
			\omega_{p^n}=1
		\bigg].
	\end{align*}
	When $p\equiv3~(mod~8)$, we have $\chi(2)=-1$, then the differential spectrum of $f_1$ is
	\begin{align*}
		\bigg[
			&\omega_0=\frac{1}{8}\left((p^n-1)\left(3p^n+3-2\lambda_{p,n}\right)\right),~
			\omega_1=\frac{1}{2}\left((p^n-1)\left(p^n-1+\lambda_{p,n}\right)\right), \\
			&\omega_2=\frac{1}{8}\left((p^n-1)\left(p^n+1-2\lambda_{p,n}\right)\right),~
			\omega_{\frac{p^n+1}{4}}=p^n-1,~
			\omega_{p^n}=1
		\bigg].
	\end{align*}
\end{remark}

\begin{remark}
	According to Lemma \ref{lem:WeilB}, $\left|\lambda_{p,n}\right|\leqslant2q^{\frac{1}{2}}$. With Remark \ref{rmk:spmwithchi2} we have $\omega_2\geqslant\frac{1}{8}\left[(p^n-1)(p^n+1-4p^{\frac{n}{2}})\right]$. This implies $\omega_2>0$ if and only if $p^n\geqslant11$. Consequently, 
	\begin{enumerate}
	\item when $p^n=3$, the differential uniformity of $f_1$ is $1$ and based on Table \ref{tab:valuesoflambda}, the differential spectrum of $f_1$ is $[\omega_0=2,~\omega_1=6,~\omega_3=1]$, which represents the unique condition of a PN function within class of binomials described in Theorem \ref{thm:DifSpm};
	\item when $p^n=7$, the differential uniformity of $f_1$ is $2$ and based on Remark \ref{rmk:spmwithchi2}, the differential spectrum of $f_1$ is $[\omega_0=18,~\omega_1=12,~\omega_2=18,\omega_7=1]$, which represents the unique condition of an APN function within the class of binomials described in Theorem \ref{thm:DifSpm};
	\item when $p^n\geqslant11$, the function in Theorem \ref{thm:DifSpm} is exactly a locally APN function.
	\end{enumerate}
\end{remark}

In what follows, we give some examples to verify our results.
\begin{example} Let $p=3$, $n=5$. Then $p^n-1=242$, $\chi(2)=-1$, $\lambda_{p,n}=2$.
	By Theorem \ref{thm:DifSpm}, the differential spectrum of $f_1$ is
   $$\left[\omega_0=22022,~\omega_1=29524,~\omega_2=7260,~\omega_{61}=242,~\omega_{243}=1\right],$$
   which coincides with the result calculated directly by MAGMA.
\end{example}

\begin{example} Let $p=7$, $n=3$. Then $p^n-1=342$, $\chi(2)=1$, $\lambda_{p,n}=20$.
	By Theorem \ref{thm:DifSpm}, the differential spectrum of $f_1$ is
   $$\left[\omega_0=44118,~\omega_1=58482,~\omega_2=14706,~\omega_{86}=342,~\omega_{343}=1\right],$$
   which coincides with the result calculated directly by MAGMA.
\end{example}

\begin{example} Let $p=11$, $n=3$. Then $p^n-1=1330$, $\chi(2)=-1$, $\lambda_{p,n}=58$.
	By Theorem \ref{thm:DifSpm}, the differential spectrum of $f_1$ is
   $$\left[\omega_0=645050,~\omega_1=923020,~\omega_2=202160,~\omega_{333}=1330,~\omega_{1331}=1\right],$$
   which coincides with the result calculated directly by MAGMA.
\end{example}

\section{Concluding remarks}\label{sec:con}
In this paper, we conducted an in-depth investigation of the differential properties of the function $f_u(x)=x^{\frac{p^n+3}{2}}+ux^2$ for $u=\pm1$. We expressed the differential spectrum of $f_{\pm1}$ in terms of quadratic character sums. This complemented the work on the differential properties of the family of the binomial in \cite{Budaghyan2024ArithmetizationorientedAP}. In the process of calculating the aimed spectrum, we solved several systems of equations that could be of use in future research or other contexts. Additionally, we extend the properties of the differential spectrum property of a power function to that of any cryptographic function, and it may be used in calculating the differential spectrum of any other polynomial.

\bibliographystyle{IEEEtranS}

\bibliography{A+note+on+the+differential+spectrum+of+a+class+of+locally+APN+function}

\begin{thebibliography}{10}
\providecommand{\url}[1]{#1}
\csname url@samestyle\endcsname
\providecommand{\newblock}{\relax}
\providecommand{\bibinfo}[2]{#2}
\providecommand{\BIBentrySTDinterwordspacing}{\spaceskip=0pt\relax}
\providecommand{\BIBentryALTinterwordstretchfactor}{4}
\providecommand{\BIBentryALTinterwordspacing}{\spaceskip=\fontdimen2\font plus
\BIBentryALTinterwordstretchfactor\fontdimen3\font minus \fontdimen4\font\relax}
\providecommand{\BIBforeignlanguage}[2]{{%
\expandafter\ifx\csname l@#1\endcsname\relax
\typeout{** WARNING: IEEEtranS.bst: No hyphenation pattern has been}%
\typeout{** loaded for the language `#1'. Using the pattern for}%
\typeout{** the default language instead.}%
\else
\language=\csname l@#1\endcsname
\fi
#2}}
\providecommand{\BIBdecl}{\relax}
\BIBdecl

\bibitem{BXe23}
F.~Bao, Y.~Xia, S.~Chen, C.~Li, and T.~Helleseth, ``The differential spectrum of a power permutation,'' \emph{Adv. Math. Commun.}, 2023.

\bibitem{BS91}
E.~Biham and A.~Shamir, ``Differential cryptanalysis of \uppercase{D}\uppercase{E}\uppercase{S}-like cryptosystems,'' \emph{J. Cryptol.}, vol.~4, no.~1, pp. 3--72, 1991.

\bibitem{BCC05}
A.~Biryukov and C.~D. Canni\`{e}re, \emph{Encyclopedia of Cryptography and Security}.\hskip 1em plus 0.5em minus 0.4em\relax Springer, 2005.

\bibitem{BCC10}
C.~Blondeau, A.~Canteaut, and P.~Charpin, ``Differential properties of power functions,'' \emph{Int. J. Inf. Coding Theory}, vol.~1, no.~2, pp. 149--170, 2010.

\bibitem{BCC11}
------, ``Differential properties of $x\rightarrow x^{2^t-1}$,'' \emph{IEEE Trans. Inform. Theory}, vol.~57, no.~12, pp. 8127--8137, 2011.

\bibitem{BP14}
C.~Blondeau and L.~Perrin, ``More differentially 6-uniform power functions,'' \emph{Des. Codes Cryptogr.}, vol.~73, no.~2, pp. 487--505, 2014.

\bibitem{BCCe20}
L.~Budaghyan, M.~Calderini, C.~Carlet, R.~S. Coulter, , and I.~Villa, ``Constructing {APN} functions through isotopic shifts,'' \emph{IEEE Trans. Inform. Theory}, vol.~66, no.~8, pp. 5299--5309, 2020.

\bibitem{Budaghyan2024ArithmetizationorientedAP}
L.~Budaghyan and M.~Pal, ``Arithmetization-oriented apn permutations,'' \emph{Des. Codes Cryptogr.}, 2024.

\bibitem{CCZ98}
C.~Carlet, P.~Charpin, and V.~A. Zinoviev, ``Codes bent functions and permutations suitable for {DES}-like cryptosystems,'' \emph{Des. Codes Cryptogr.}, vol.~15, no.~2, pp. 125--156, 1998.

\bibitem{CP19}
P.~Charpin and J.~Peng, ``Differential uniformity and the associated codes of cryptographic functions,'' \emph{Adv. Math. Commun.}, vol.~13, no.~4, pp. 579--600, 2019.

\bibitem{CHNe13}
S.-T. Choi, S.~Hong, J.-S. No, and H.~Chung, ``Differential spectrum of some power functions in odd prime characteristic,'' \emph{Finite Fields Appl.}, vol.~21, pp. 11--29, 2013.

\bibitem{DO68}
P.~Dembowski and T.~G. Ostrom, ``Planes of order $n$ with collineation groups of order $n^2$,'' \emph{Math. Z.}, vol. 103, no.~2, pp. 239--258, 1968.

\bibitem{DHKM01}
H.~Dobbertin, T.~Helleseth, P.~V. Kumar, and H.~Martinsen, ``Ternary m-sequences with three-valued cross-correlation function: new decimations of welch and niho type,'' \emph{IEEE Trans. Inform. Theory}, vol.~47, no.~4, pp. 1473--1481, 2001.

\bibitem{DMMe03}
H.~Dobbertin, D.~Mills, E.~N. M\"{u}ller, A.~Pott, and W.~Willems, ``{APN} functions in odd characteristic,'' \emph{Discrete Math.}, vol. 267, no. 1-3, pp. 95--112, 2003.

\bibitem{HRS99}
T.~Helleseth, C.~Rong, and D.~Sandberg, ``New families of almost perfect nonlinear power functions,'' \emph{IEEE Trans. Inform. Theory}, vol.~45, no.~2, pp. 475--485, 1999.

\bibitem{HS97}
T.~Helleseth and D.~Sandberg, ``Some power functions with low differential uniformity,'' \emph{Appl. Algebra Engrg. Comm. Comput.}, vol.~8, no.~5, pp. 363--370, 1997.

\bibitem{HLXe23}
Z.~Hu, N.~Li, L.~Xu, X.~Zeng, and X.~Tang, ``The differential spectrum and boomerang spectrum of a class of locally-{APN} functions,'' \emph{Des. Codes Cryptogr.}, vol.~91, no.~5, pp. 1695--1711, 2023.

\bibitem{JLLQ21}
S.~Jiang, K.~Li, Y.~Li, and L.~Qu, ``Differential spectrum of a class of power functions,'' \emph{J. Cryptology}, vol.~9, no.~3, pp. 484--495, 2021.

\bibitem{JLLQ22}
------, ``Differential and boomerang spectrums of some power permutations,'' \emph{Cryptogr. Commun.}, vol.~14, pp. 371--393, 2022.

\bibitem{LRF21}
L.~Lei, W.~Ren, and C.~Fan, ``The differential spectrum of a class of power functions over finite fields,'' \emph{Adv. Math. Commun.}, vol.~15, no.~3, pp. 525--537, 2021.

\bibitem{FF}
R.~Lidl and H.~Niederreiter, \emph{Finite Fields}.\hskip 1em plus 0.5em minus 0.4em\relax Cambridge University Press, 1997.

\bibitem{LgJ24}
G.~Liu, S.~Jiang, and K.~Li, ``On differential spectra of involutions with low differential uniformity over finite fields with even characteristic.'' \emph{Appl. Algebra Engrg. Comm. Comput.}, 2024.

\bibitem{MXLH22}
Y.~Man, Y.~Xia, C.~Li, and T.~Helleseth, ``On the differential properties of the power function $x^{p^m+2}$,'' \emph{Finite Fields Appl.}, vol.~84, no.~10, pp. 1--22, 2022.

\bibitem{ORG}
G.~J. Ness and T.~Helleseth, ``A new family of ternary almost perfect nonlinear mappings,'' \emph{IEEE Trans. Inform. Theory}, vol.~53, no.~7, pp. 2581--2586, 2007.

\bibitem{N91}
K.~Nyberg, ``Perfect nonlinear s-boxes,'' \emph{Advances in Cryptology-EUROCRYPT 1991}, vol. 547, pp. 378--386, 1991.

\bibitem{N94}
------, ``Differentially uniform mappings for cryptography,'' \emph{Advances in Cryptology-EUROCRYPT 1994}, vol. 765, pp. 55--64, 1994.

\bibitem{PdSW17}
D.~Panario, D.~Santana, and Q.~Wang, ``Ambiguity, deficiency and differential spectrum of normalized permutation polynomials over finite fields,'' \emph{Finite Fields Appl.}, vol.~47, pp. 330--350, 2017.

\bibitem{PLZ23}
T.~Pang, N.~Li, and X.~Zeng, ``On the differential spectrum of a differentially 3-uniform power function,'' \emph{Finite Fields Appl.}, vol.~87, 2023.

\bibitem{PTW16}
J.~Peng, C.~H. Tan, and Q.~Wang, ``A new family of differentially 4-uniform permutations over $\gf_2^{2m}$ for odd $k$,'' \emph{Sci. China Math.}, vol.~59, no.~6, pp. 1221--1234, 2016.

\bibitem{MR2514094}
J.~H. Silverman, \emph{The arithmetic of elliptic curves}, 2nd~ed., ser. Graduate Texts in Mathematics.\hskip 1em plus 0.5em minus 0.4em\relax Springer, Dordrecht, 2009, vol. 106.

\bibitem{TY23}
X.~Tan and H.~Yan, ``Differential spectrum of a class of {APN} power functions,'' \emph{Des. Codes Cryptogr.}, vol.~91, pp. 2755--2768, 2023.

\bibitem{TDX20}
C.~Tang, C.~Ding, and M.~Xiong, ``Codes differentially $\delta$-uniform functions and $t$-designs,'' \emph{IEEE Trans. Inform. Theory}, vol.~66, no.~6, pp. 3691--3703, 2020.

\bibitem{TuLe23}
Z.~Tu, N.~Li, Y.~Wu, X.~Zeng, X.~Tang, and Y.~Jiang, ``On the differential spectrum and the apcn property of a class of power functions over finite fields,'' \emph{IEEE Trans. Inform. Theory}, vol.~69, no.~1, pp. 582--597, 2023.

\bibitem{FurXia}
Y.~Xia, F.~Bao, S.~Chen, C.~Li, and T.~Helleseth, ``More differential properties of the {N}ess-{H}elleseth function,'' \emph{IEEE Trans. Inform. Theory}, vol.~70, no.~8, pp. 6076--6090, 2024.

\bibitem{XZLH18}
Y.~Xia, X.~Zhang, C.~Li, and T.~Helleseth, ``The differential spectrum of a ternary power function,'' \emph{Finite Fields Appl.}, vol.~64, pp. 1--16, 2020.

\bibitem{WOS:001352564200001}
Y.~Xia, C.~Li, F.~Bao, S.~Chen, and T.~Helleseth, ``Further investigation on differential properties of the generalized ness-helleseth function,'' \emph{Des. Codes Cryptogr.}, 2024.

\bibitem{XML23}
X.~Xie, S.~Mesnager, N.~Li, D.~He, and X.~Zeng, ``On the niho type locally-{APN} power functions and their boomerang spectrum,'' \emph{IEEE Trans. Inform. Theory}, vol.~69, no.~6, pp. 4056--4064, 2023.

\bibitem{XY17}
M.~Xiong and H.~Yan, ``A note on the differential spectrum of a differentially 4-uniform power function,'' \emph{Finite Fields Appl.}, vol.~48, pp. 117--125, 2017.

\bibitem{XYY18}
M.~Xiong, H.~Yan, and P.~Yuan, ``On a conjecture of differentially 8-uniform power functions,'' \emph{Des. Codes Cryptogr.}, vol.~86, no.~8, pp. 1601--1621, 2018.

\bibitem{XCX16}
G.~Xu, X.~Cao, and S.~Xu, ``Constructing new {APN} functions and bent functions over finite fields of odd characteristic via the switching method,'' \emph{Cryptogr. Commun.}, vol.~8, no.~1, pp. 155--171, 2016.

\bibitem{YL21}
H.~Yan and C.~Li, ``Differential spectra of a class of power permutations with characteristic 5,'' \emph{Des. Codes Cryptogr.}, vol.~89, no.~6, pp. 1181--1191, 2021.

\bibitem{YL22}
H.~Yan and Z.~Li, ``A note on the differential spectrum of a class of power mappings with niho exponent,'' \emph{Cryptogr. Commun.}, vol.~14, no.~5, pp. 1081--1089, 2022.

\bibitem{YMT23}
H.~Yan, S.~Mesnager, and X.~Tan, ``The complete differential spectrum of a class of power permutations over odd characteristic finite fields,'' \emph{IEEE Trans. Inform. Theory}, vol.~69, no.~11, pp. 7426--7438, 2023.

\bibitem{YMT24}
------, ``On a class of {APN} power functions over odd characteristic finite fields: Their differential spectrum and c-differential properties,'' \emph{Discrete Math.}, vol. 347, no.~4, 2024.

\bibitem{YXe22}
H.~Yan, Y.~Xia, C.~Li, T.~Helleseth, M.~Xiong, and J.~Luo, ``The differential spectrum of the power mapping $x^{p^n-3}$,'' \emph{IEEE Trans. Inform. Theory}, vol.~68, no.~8, pp. 5535--5547, 2022.

\bibitem{YZWe19}
H.~Yan, Z.~Zhou, J.~Weng, J.~Wen, T.~Helleseth, and Q.~Wang, ``Differential spectrum of kasami power permutations over odd characteristic finite fields,'' \emph{IEEE Trans. Inform. Theory}, vol.~65, no.~10, pp. 6819--6826, 2019.

\bibitem{ZHSS14}
Z.~Zha, S.~S. L.~Hu, and Y.~Sun, ``New constructions of {APN} polynomial functions in odd characteristic,'' \emph{Appl. Algebra Eng. Commun. Comput.}, vol.~25, no.~4, pp. 249--263, 2014.

\bibitem{ZW11}
Z.~Zha and X.~Wang, ``Almost perfect nonlinear power functions in odd characteristic,'' \emph{IEEE Trans. Inform. Theory}, vol.~57, no.~7, pp. 4826--4832, 2011.

\end{thebibliography}
\end{document}